\newcommand{\eprint}[1]{\href{http://arxiv.org/abs/#1}{#1}}
\newcommand{\C}{{\mathbb{C}}}
\newcommand{\F}{{\mathbb{F}}}
\newcommand{\R}{{\mathbb{R}}}
\newcommand{\tr}{\mathop{\mathrm{tr}}}
\newcommand{\spn}{\mathop{\mathrm{span}}}
\newcommand{\poly}{\mathop{\mathrm{poly}}}
\newcommand{\abs}{\mathop{\mathrm{abs}}}
\renewcommand{\d}{{\mathrm{d}}}
\newcommand{\ii}{{\mathrm{i}}}
\newcommand{\norm}[1]{\|{#1}\|}
\newcommand{\Norm}[1]{\left\|{#1}\right\|}
\newcommand{\floor}[1]{{\lfloor{#1}\rfloor}}
\newcommand{\ceil}[1]{{\lceil{#1}\rceil}}
\newcommand{\nint}[1]{{\lfloor{#1}\rceil}}
\newcommand{\amp}[2]{a_{{#1}|{#2}}}
\renewcommand{\>}{\rangle}
\newcommand{\<}{\langle}
\newcommand{\be}{\begin{equation}}
\newcommand{\ee}{\end{equation}}
\def\ba#1\ea{\begin{align}#1\end{align}}
\newtheorem{theorem}{Theorem}
\newcommand{\eq}[1]{(\ref{eq:#1})}
\renewcommand{\sec}[1]{Section~\ref{sec:#1}}
\newcommand{\thm}[1]{Theorem~\ref{thm:#1}}
\begin{document}
  

\title{On the relationship between continuous- \\ and discrete-time quantum walk}

\author{Andrew M.\ Childs\thanks{amchilds@uwaterloo.ca} \\[.5ex] Department of Combinatorics \& Optimization \\ and Institute for Quantum Computing \\ University of Waterloo}
\date{}

\maketitle

\begin{abstract}
Quantum walk is one of the main tools for quantum algorithms.  Defined by analogy to classical random walk, a quantum walk is a time-homogeneous quantum process on a graph.   Both random and quantum walks can be defined either in continuous or discrete time.  But whereas a continuous-time random walk can be obtained as the limit of a sequence of discrete-time random walks, the two types of quantum walk appear fundamentally different, owing to the need for extra degrees of freedom in the discrete-time case.

In this article, I describe a precise correspondence between continuous- and discrete-time quantum walks on arbitrary graphs.  Using this correspondence, I show that continuous-time quantum walk can be obtained as an appropriate limit of discrete-time quantum walks.  The correspondence also leads to a new technique for simulating Hamiltonian dynamics, giving efficient simulations even in cases where the Hamiltonian is not sparse.  The complexity of the simulation is linear in the total evolution time, an improvement over simulations based on high-order approximations of the Lie product formula.  As applications, I describe a continuous-time quantum walk algorithm for element distinctness and show how to optimally simulate continuous-time query algorithms of a certain form in the conventional quantum query model.  Finally, I discuss limitations of the method for simulating Hamiltonians with negative matrix elements, and present two problems that motivate attempting to circumvent these limitations.
\end{abstract}

\section{Introduction}
\label{sec:intro}

Recently, quantum walk has been established as one of the dominant algorithmic techniques for quantum computers.  In the black-box setting, quantum walk provides exponential speedup over classical computation \cite{CCDFGS03,CSV07}.  Moreover, many quantum walk algorithms achieve polynomial speedup over classical computation for problems of practical interest.  Following the development of quantum walk algorithms for search on graphs \cite{CG03,SKW02} (subsequently improved in \cite{AKR04,CG04,Tul08}), Ambainis cemented the importance of quantum walk by giving an optimal algorithm for the element distinctness problem \cite{Amb07}.  This approach was later generalized \cite{MNRS06,Sze04c} and applied to quantum algorithms for triangle finding \cite{MSS05}, checking matrix multiplication \cite{BS06}, and testing group commutativity \cite{MN07}.  More recently, Farhi, Goldstone, and Gutmann used quantum walk to develop an optimal algorithm for evaluating balanced binary game trees \cite{FGG07}, which led to optimal and near-optimal algorithms for evaluating broad classes of formulas \cite{ACRSZ07,RS07}.  Indeed, since Grover's well-known search algorithm \cite{Gro97} can be interpreted as a quantum walk on the complete graph, nearly all known quantum algorithms achieving polynomial speedup over classical computation can be viewed as quantum walk algorithms.

Quantum walk is defined by analogy to classical random walk, and is motivated by the ubiquity of random walk in classical randomized computation.
Random walks, or Markov chains, can either evolve continuously in time or by a discrete sequence of steps.  The relationship between these two notions of random walk is straightforward, and indeed it is often possible to analyze both settings simultaneously (see for example \cite{AF02}).

A discrete-time Markov chain with $N$ states is specified by an $N \times N$ stochastic matrix $M$, a matrix with nonnegative entries whose columns sum to $1$.  One step of the walk transforms an initial probability distribution $p \in \R^N$ into a new probability distribution $p'=Mp$.  To approach a continuous-time Markov process, consider a \emph{lazy random walk} in which we only make a transition with some small probability $\epsilon > 0$, replacing $M$ by $\epsilon M + (1-\epsilon) I$.  Then we have
\be
  p'-p = \epsilon (M-I) p. \label{eq:lazyrandom}
\ee
In the limit $\epsilon \to 0$, letting one step of the discrete-time walk correspond to a time interval $\epsilon$, we obtain the dynamics
\be
  \frac{\d}{\d{t}} p(t) = (M-I) p(t),
\label{eq:ctsclassical}
\ee
a continuous-time Markov process generated by $M-I$.  In fact, any continuous-time Markov chain with $N$ sites can be written in the form \eq{ctsclassical} for some $N \times N$ stochastic matrix $M$, up to a rescaling of the time variable.\footnote{For example, given a graph with adjacency matrix $A$, maximum degree $d$, and a diagonal matrix of vertex degrees $D$ (with $D_{jj}=\deg(j)$), the continuous-time random walk generated by the normalized Laplacian $(A-D)/d$ (a negative semidefinite operator) can be viewed as the limit of the discrete-time random walk with transition matrix $M = A/d + (I-D/d)$.  The simple discrete-time random walk, in which a transition is made to a randomly selected neighbor at each step, has the transition matrix $M = A D^{-1}$, and corresponds to the continuous-time random walk generated by $(A-D)D^{-1}$.  For a $d$-regular graph, $D=dI$, so these two choices are equivalent.}

Similarly, one can define two notions of quantum walk, the \emph{continuous-time quantum walk} \cite{CFG02,FG98} and the \emph{discrete-time quantum walk} \cite{AAKV01,ABNVW01,Wat01a}.  But in contrast to the classical case, these two models are apparently incomparable.

The continuous-time quantum walk on an undirected graph is obtained by replacing the diffusion equation \eq{ctsclassical} with the Schr{\"o}dinger equation
\be
  \ii \frac{\d}{\d{t}} q(t) = H q(t).
\label{eq:ctsquantum}
\ee
Here the Hamiltonian $H$ is an $N \times N$ Hermitian matrix with $H_{jk}\ne0$ if and only if vertices $j$ and $k$ are connected, such as the adjacency matrix or Laplacian of the graph, and $q(t) \in \C^N$ is a vector of complex amplitudes, one for each vertex.  The state space for the quantum walk is the same as for the corresponding classical random walk, and the dynamics are a direct quantum analog of \eq{ctsclassical}.

A step of a discrete-time quantum walk is a unitary operation that moves amplitude between adjacent vertices of a graph.  Unlike in continuous time, there is no general definition of a discrete-time quantum walk that takes place directly on the vertices.
In particular, there is no translation-invariant discrete-time unitary process on a $d$-dimensional lattice \cite{Mey96a,Mey96b}, and in fact, only graphs with special properties can support local unitary dynamics at all, even without homogeneity restrictions \cite{Sev03}.  However, by enlarging the state space to include extra degrees of freedom, sometimes referred to as a quantum coin---or equivalently, by considering a walk on the directed edges of a graph---this limitation can be overcome \cite{Wat01a}.

Despite these differences, examples of continuous- and discrete-time quantum walks on graphs show many qualitative similarities.  For example, the walks on the line both spread linearly in time \cite{ABNVW01,FG98}, and the walks on the hypercube have the same instantaneous mixing time \cite{MR01}.  Furthermore, some quantum algorithms have been constructed that behave similarly in the two models (compare \cite{CG03,CG04} to \cite{AKR04,SKW02}, and \cite{FGG07} to \cite{ACRSZ07}).  However, since the discrete-time walk takes place on a larger state space, it cannot simply reduce to the continuous-time walk as a limiting case.  Indeed, no general correspondence between the dynamics of the two models has been given previously (although a particular correspondence has been noted for the infinite line and for the three-dimensional square lattice \cite{Str06}).

From a computational perspective, the two types of quantum walk each have their own advantages.  As it does not require enlarging the state space, the continuous-time model is arguably more natural; it is simpler to define and usually easier to analyze.  For example, while the exponential speedup by continuous-time quantum walk described in \cite{CCDFGS03} is generally suspected to carry over to the discrete-time model, the dynamics are more involved, and to the best of my knowledge no rigorous analysis has been provided so far.

On the other hand, the discrete-time model has the advantage that it is generally easy to implement using quantum circuits, whereas the main technique for implementing a continuous-time quantum walk requires the maximum degree of the graph to be small \cite{AT03,BACS05,Chi04,CCDFGS03}.  Thus, for example, the element distinctness algorithm \cite{Amb07} seems difficult to reproduce using continuous-time quantum walk.

In this article, I explore the relationship between continuous- and discrete-time quantum walk, showing how to carry the desirable features of each model to the other.  The foundation of the results is a formal correspondence between the two models described in \sec{szegedy}.  This correspondence employs the quantization of discrete-time Markov chains proposed by Szegedy \cite{Sze04c}, which can also be used to give a discrete-time quantum walk corresponding to any Hermitian matrix, generalizing a construction of \cite{ACRSZ07,RS07}.

Using this correspondence, \sec{limit} introduces a discrete-time quantum walk whose behavior approaches that of a related continuous-time quantum walk in a certain limit.  In addition to providing a conceptual link between the two models, this limit offers a generic means of converting continuous-time quantum walk algorithms into discrete-time ones.  For example, it shows that there is an efficient discrete-time quantum walk algorithm for the graph traversal problem of \cite{CCDFGS03}.

More significantly, I apply the correspondence to the simulation of Hamiltonian dynamics by quantum circuits, giving more efficient simulations than have been known so far.  For a sparse Hamiltonian, it is well-known that the evolution according to \eq{ctsquantum} for time $t$ can be simulated in $t^{1+o(1)}$ steps \cite{BACS05,Chi04}.  By reduction to the problem of computing parity, reference \cite{BACS05} established what might be called a \emph{no fast-forwarding theorem}: in general, a Hamiltonian cannot be simulated for time $t$ using a number of steps that is sublinear in $t$.  However, this left open the question of whether a truly linear-time simulation is possible.  Moreover, even less has been known about the case where the Hamiltonian is not necessarily sparse, as introduced in \sec{nonsparse}.  I show in \sec{linear} that by applying phase estimation to a discrete-time quantum walk, one can simulate Hamiltonian dynamics for time $t$ using $O(t)$ operations, not only in the sparse case, but for even more general succinctly specified Hamiltonians.  As an example of this method in action, I give an optimal continuous-time quantum walk algorithm for element distinctness (\sec{ed}).  I also describe an application to the simulation of continuous-time query algorithms using conventional quantum queries (\sec{hamquery}).

Unfortunately, although the new simulation method can be efficient well beyond the case of sparse Hamiltonians, it can sometimes be inefficient if the Hamiltonian is not only dense, but also includes matrix elements of both signs (or, more generally, if it has complex entries).  I conclude in \sec{signproblem} by posing some examples of Hamiltonians the method fails to address, but whose efficient simulation would lead to new quantum algorithms.

\section{A discrete-time quantum walk for any Hamiltonian}
\label{sec:szegedy}

We begin by constructing a discrete-time quantum walk corresponding to an arbitrary Hermitian matrix, i.e., to the Hamiltonian of any finite-dimensional quantum system.  This construction is based on the work of Szegedy, who defined a discrete-time quantum walk corresponding to an arbitrary discrete-time classical Markov chain \cite{Sze04c}.  It was observed in \cite{ACRSZ07} that Szegedy's framework can be used to connect continuous- and discrete-time quantum walk, assuming that the continuous-time quantum walk is generated by an entrywise positive, symmetric matrix.  In \cite{RS07} this was generalized to an arbitrary Hermitian matrix whose graph of nonzero entries is bipartite.  Here, we describe a discrete-time quantum walk corresponding to a general $N \times N$ Hermitian matrix $H$.

Fix an orthonormal basis $\{|j\>: j=1,\ldots,N\}$ of $\C^N$, and let $\abs(H) := \sum_{j,k=1}^N |H_{jk}| \, |j\>\<k|$ denote the elementwise absolute value of $H$ in that basis.  Let $|d\> := \sum_{j=1}^N d_j |j\>$ be a principal eigenvector of $\abs(H)$ (i.e., an eigenvector with eigenvalue $\norm{\abs(H)}$).  If $\abs(H)$ is irreducible, then by the Perron-Frobenius theorem, $|d\>$ is unique up to a phase, and that phase can be chosen so that the entries $d_j$ are strictly positive.  We focus on the irreducible case without loss of generality, as if $\abs(H)$ is reducible, we can treat each of its irreducible components separately.  Note that $\norm{\abs(H)}\ge\norm{H}$, as can easily be proved using the triangle inequality.

Define a set of $N$ quantum states $|\psi_1\>,\ldots,|\psi_N\> \in \C^N \otimes \C^N$ as
\begin{align}
 |\psi_j\> &:= \frac{1}{\sqrt{\norm{\abs(H)}}} \sum_{k=1}^N \sqrt{H^*_{jk} \frac{d_k}{d_j}} \, |j,k\>. 
\label{eq:jstates}
\end{align}
It is straightforward to check that these states are orthonormal.
Notice that $\<j,k|\psi_j\> \ne 0$ if and only if $j$ is adjacent to $k$ in the graph of nonzero entries of $H$.

The discrete-time quantum walk corresponding to $H$ is defined as the unitary operator obtained by first reflecting about $\spn\{|\psi_j\>\}$ and then exchanging the two registers with the swap operation $S$ (i.e., $S|j,k\>=|k,j\>$).  Equivalently, two steps of the walk can be viewed as first reflecting about $\spn\{|\psi_j\>\}$, and then reflecting about $\spn\{S|\psi_j\>\}$.  Szegedy showed that the spectrum of a product of reflections depends in a simple way on the matrix of inner products between orthonormal bases for the two subspaces \cite[Theorem~1]{Sze04c}.\footnote{Note that this can also be viewed as a consequence of classic result of Jordan \cite{Reg06}.}  In the present case, we have
\ba
  \<\psi_j|S|\psi_k\>
  &= \frac{H_{jk}}{\norm{\abs(H)}}
\ea
(using the fact that $H=H^\dag$), so we obtain a walk corresponding to $H$.

More concretely, let
\be
  T:=\sum_{j=1}^N |\psi_j\>\<j|
\label{eq:isometry}
\ee
be the isometry mapping $|j\> \in \C^N$ to $|\psi_j\> \in \C^N \otimes \C^N$.  Then $TT^\dag$ is the projector onto $\spn\{|\psi_j\>\}$, so the walk operator described above is $S(2TT^\dag-1)$.  We have

\begin{theorem}\label{thm:spectrum}
Suppose that $\frac{H}{\norm{\abs(H)}}|\lambda\>=\lambda|\lambda\>$.
Then the unitary operator
\be
  U:=\ii S(2TT^\dag - 1)
\label{eq:discreteqwalk}
\ee
has two normalized eigenvectors
\be
  |\mu_\pm\> :=
  \frac{1 - e^{\pm\ii\arccos \lambda} S}
       {\sqrt{2(1 - \lambda^2)}}
  T|\lambda\>
\label{eq:discreteqwalkevec}
\ee
with eigenvalues $\mu_\pm := \pm e^{\pm\ii\arcsin\lambda}$.
\end{theorem}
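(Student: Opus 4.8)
The plan is to exploit the fact that $U$ preserves a two-dimensional subspace attached to each eigenvector of $H/\norm{\abs(H)}$, reducing the problem to diagonalizing a single $2\times2$ matrix. First I would record the three algebraic facts that drive everything: since the states $|\psi_j\>$ are orthonormal, $T^\dag T = 1$, so $T$ is an isometry and $TT^\dag$ is the projector onto $\spn\{|\psi_j\>\}$; the swap satisfies $S^2 = 1$ and $S^\dag = S$; and the inner-product computation already carried out above says precisely that $T^\dag S T = H/\norm{\abs(H)}$, so in particular $T^\dag S T\,|\lambda\> = \lambda\,|\lambda\>$.

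Next I would abbreviate $|a\> := T|\lambda\>$ and $|b\> := S T|\lambda\> = S|a\>$, and set $V_\lambda := \spn\{|a\>,|b\>\}$. Using $T^\dag T = 1$ one checks the Gram relations $\<a|a\> = \<b|b\> = 1$ and $\<a|b\> = \<\lambda|T^\dag S T|\lambda\> = \lambda$. I would then show $V_\lambda$ is $U$-invariant by evaluating the reflection on the two generators: $TT^\dag|a\> = T(T^\dag T)|\lambda\> = |a\>$, hence $(2TT^\dag-1)|a\> = |a\>$; and $TT^\dag|b\> = T(T^\dag S T)|\lambda\> = \lambda|a\>$, hence $(2TT^\dag-1)|b\> = 2\lambda|a\> - |b\>$. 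Applying $S$ and the prefactor $\ii$ then gives $U|a\> = \ii|b\>$ and $U|b\> = -\ii|a\> + 2\ii\lambda|b\>$.

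The statement now reduces to diagonalizing the $2\times2$ matrix $\left(\begin{smallmatrix} 0 & -\ii \\ \ii & 2\ii\lambda \end{smallmatrix}\right)$ representing $U|_{V_\lambda}$ in the (non-orthonormal) basis $\{|a\>,|b\>\}$. Its characteristic equation is $\mu^2 - 2\ii\lambda\mu - 1 = 0$, with roots $\mu = \ii\lambda \pm \sqrt{1-\lambda^2}$; using $e^{\pm\ii\arcsin\lambda} = \sqrt{1-\lambda^2} \pm \ii\lambda$ these are exactly $\pm e^{\pm\ii\arcsin\lambda}$, matching the claimed eigenvalues. Solving the linear system for each root yields eigenvectors of the form $(1 - e^{\pm\ii\arccos\lambda}S)T|\lambda\>$, where I would invoke the identity $\arccos\lambda = \frac{\pi}{2} - \arcsin\lambda$ (equivalently $e^{\ii\arccos\lambda} = \ii\,e^{-\ii\arcsin\lambda}$) to rewrite the eigenvector coefficients and pin down which sign pairs with which eigenvalue. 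The normalization is then a one-line computation: writing $c := e^{\pm\ii\arccos\lambda}$, so that $|c| = 1$ and $\mathrm{Re}\,c = \lambda$, one gets $\norm{|a\> - c|b\>}^2 = 2 - 2\lambda\,\mathrm{Re}\,c = 2(1-\lambda^2)$, the square of the stated denominator.

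The one genuine subtlety is the degenerate case $\lambda = \pm1$: there $\sqrt{1-\lambda^2} = 0$, the vectors $|a\>$ and $|b\>$ become parallel, $V_\lambda$ collapses to a single dimension, and the normalized eigenvectors as written are undefined, so this boundary case needs separate comment. Away from it the argument is the routine diagonalization above; alternatively one can bypass the matrix computation entirely and simply substitute the claimed $|\mu_\pm\>$ into $U|\mu_\pm\> = \mu_\pm|\mu_\pm\>$, verifying the identity directly from $U|a\> = \ii|b\>$ and $U|b\> = -\ii|a\> + 2\ii\lambda|b\>$.
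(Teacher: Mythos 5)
Your proposal is correct and follows essentially the same route as the paper: both restrict $U$ to the invariant subspace spanned by $T|\lambda\>$ and $ST|\lambda\>$ using $T^\dag T=1$ and $T^\dag S T = H/\norm{\abs(H)}$, arrive at the same quadratic $\mu^2 = 1+2\ii\lambda\mu$, and do the same normalization computation --- the paper merely phrases the diagonalization as an ansatz $T|\lambda\> + \ii\mu\, S T|\lambda\>$ rather than as a $2\times 2$ characteristic polynomial. Your added remarks about pinning down which $\arccos$ sign pairs with which eigenvalue and about the degenerate case $\lambda=\pm1$ address details the paper leaves implicit, but do not change the substance.
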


This result follows directly from \cite[Theorem~1]{Sze04c}, but the proof can be simplified since the walk is defined using the swap operation.  We provide a proof here for the sake of completeness, along the same lines as \cite[Theorem~6]{ACRSZ07}.

\begin{proof}
Consider the action of $U$ on the vector $T|\lambda\>$.  Using $T^\dag T = 1$, we have
\ba
  U T|\lambda\> &= \ii S T|\lambda\>,
\ea
and using $T^\dag S T = H/\norm{\abs(H)}$, we have
\ba
  U S T|\lambda\> &=2\ii\lambda S T |\lambda\> - \ii T|\lambda\>.
\ea
Thus, the unnormalized state $|\mu\> := T|\lambda\> + \ii \mu ST|\lambda\>$ has
\be
  U|\mu\> = \mu T|\lambda\> + \ii(1+2\ii\lambda\mu) ST|\lambda\>,
\ee
and is an eigenvector of $U$ with eigenvalue $\mu$ provided $1+2\ii\lambda\mu = \mu^2$, i.e.,
\be
  \mu = \pm \sqrt{1-\lambda^2} + \ii \lambda
      = \ii e^{\mp \ii \arccos(\lambda)}
      = \pm e^{\pm \ii \arcsin(\lambda)}
\ee
as claimed.  The normalization follows from
\ba
  \<\mu|\mu\>
  = 1 + \ii\lambda(\mu - \mu^*) + |\mu|^2
  = 2(1-\lambda^2),
\ea
which completes the proof.
\end{proof}

\section{Continuous-time walk as a limit of discrete-time walks}
\label{sec:limit}

Using the correspondence described in the previous section, we can construct a discrete-time quantum walk whose behavior reproduces that of the continuous-time quantum walk generated by $H$ in an appropriate limit.  First we construct a discrete-time process that approximates the continuous-time one provided its eigenvalues are sufficiently small; then we construct a ``lazy quantum walk'' to obtain small eigenvalues.

Let $\Pi$ denote the projector onto the subspace $\spn\{T|j\>,ST|j\>: j=1,\ldots,N\}$, and consider the action of $U$ restricted to this (invariant) subspace.  If all eigenvalues $\lambda$ of $H/\norm{\abs(H)}$ are small, then $\arcsin\lambda \approx \lambda$, meaning that the eigenvalues of $H$ are approximately linearly related to the eigenphases of $U$.  Furthermore, the eigenvectors of $U$ are $|\mu_\pm\> \approx (1 \mp \ii S + \lambda S)T|\lambda\>/\sqrt2$, and we have
\be
  \ii \Pi U \Pi \approx \sum_{\lambda,\pm} \mp e^{\pm \ii \lambda} \frac{1 \mp \ii S + \lambda S}{\sqrt2} T|\lambda\>\<\lambda|T^\dag \frac{1 \pm \ii S + \lambda S}{\sqrt2}.
\ee
This expression appears similar to the spectral expansion of the evolution according to $H/\norm{\abs(H)}$ for unit time,
\be
  e^{-\ii H/\norm{\abs(H)}} = \sum_\lambda e^{-\ii \lambda} |\lambda\>\<\lambda|,
\ee
except for (i) application of the isometry $T$, (ii) the presence of both positive and negative phases for each eigenvalue $\lambda$, and (iii) rotations by approximately a square root of $\ii S$, namely $(1 \pm \ii S)/\sqrt 2$.  To obtain only the `$-$' terms, we rotate the basis by $(1+\ii S)/\sqrt 2$.  Then we find
\ba
   T^\dag \frac{1-\ii S}{\sqrt 2} (\ii U)^\tau \frac{1+\ii S}{\sqrt 2} T
   &\approx e^{-\ii H \tau},
\ea
where the calculation has been carried out to $O(\lambda)$.

More precisely, we have the following approximation:
\begin{theorem}\label{thm:dynamics}
Let $h := \norm{H}/\norm{\abs(H)}$.  Then
\be
  \Norm{T^\dag \frac{1-\ii S}{\sqrt 2} (\ii U)^\tau \frac{1+\ii S}{\sqrt 2} T - e^{-\ii \tau H/\norm{\abs(H)}}} \le h^2 \big( 1 + (\tfrac{\pi}{2}-1)h\tau \big).
  \label{eq:dynamicsestimate}
\ee
\end{theorem}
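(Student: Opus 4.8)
The plan is to exploit the block structure of $U$ and reduce the operator‑norm estimate to a scalar problem. For each eigenvalue $\lambda$ of $H/\norm{\abs(H)}$ with eigenvector $|\lambda\rangle$, the two‑dimensional subspace $V_\lambda := \spn\{T|\lambda\rangle,\,ST|\lambda\rangle\}$ is invariant under both $S$ and $U$ — this is exactly what the proof of \thm{spectrum} establishes — and for orthogonal eigenvectors these subspaces are mutually orthogonal, since $T^\dag T = 1$ and $T^\dag S T = H/\norm{\abs(H)}$ force $\langle\lambda'|T^\dag T|\lambda\rangle = \langle\lambda'|T^\dag S T|\lambda\rangle = 0$ whenever $\langle\lambda'|\lambda\rangle = 0$. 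Because $\tfrac{1\pm\ii S}{\sqrt2}$ and $(\ii U)^\tau$ all preserve each $V_\lambda$, while $T^\dag$ maps $V_\lambda$ back into $\C\,|\lambda\rangle$ (again via $T^\dag T|\lambda\rangle = |\lambda\rangle$ and $T^\dag S T|\lambda\rangle = \lambda|\lambda\rangle$), the operator $M := T^\dag \tfrac{1-\ii S}{\sqrt2}(\ii U)^\tau\tfrac{1+\ii S}{\sqrt2}T$ is diagonal in the orthonormal eigenbasis $\{|\lambda\rangle\}$. As $e^{-\ii\tau H/\norm{\abs(H)}}$ is diagonal in the same basis, the left‑hand side of \eq{dynamicsestimate} equals $\max_\lambda |w_\lambda - e^{-\ii\tau\lambda}|$, where $w_\lambda := \langle\lambda|M|\lambda\rangle$.

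Next I would compute $w_\lambda$ exactly inside one block. The key simplification is that $\tfrac{1-\ii S}{\sqrt2} = \bigl(\tfrac{1+\ii S}{\sqrt2}\bigr)^{-1}$ because $S^2 = 1$, so $w_\lambda$ is the diagonal matrix element $\langle\chi_\lambda|(\ii U)^\tau|\chi_\lambda\rangle$ of $(\ii U)^\tau$ in the rotated state $|\chi_\lambda\rangle := \tfrac{1+\ii S}{\sqrt2}T|\lambda\rangle$. Expanding $|\chi_\lambda\rangle$ in the orthonormal eigenbasis $\{|\mu_+\rangle,|\mu_-\rangle\}$ of \thm{spectrum}, a short computation (using the normalization $2(1-\lambda^2)$ from that proof together with $\langle\lambda|T^\dag S T|\lambda\rangle = \lambda$) gives overlap probabilities $p_\pm := |\langle\mu_\pm|\chi_\lambda\rangle|^2 = \tfrac12\bigl(1 \pm \sqrt{1-\lambda^2}\bigr)$, so that $w_\lambda = p_+(\ii\mu_+)^\tau + p_-(\ii\mu_-)^\tau$. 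The dominant weight $\tfrac12(1+\sqrt{1-\lambda^2})$ sits on the branch whose eigenphase is $\pm\arcsin\lambda$, i.e.\ the branch reproducing $e^{-\ii\tau\lambda}$ to leading order; I would fix the signs and the overall factor of $\ii$ so that this branch contributes precisely $e^{-\ii\tau\arcsin\lambda}$.

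Finally I would bound $|w_\lambda - e^{-\ii\tau\lambda}|$ by splitting off the two independent error sources. Since $p_+ + p_- = 1$ and $|(\ii\mu_\pm)^\tau| = 1$, the triangle inequality gives
\[
  |w_\lambda - e^{-\ii\tau\lambda}| \le \bigl|e^{-\ii\tau\arcsin\lambda} - e^{-\ii\tau\lambda}\bigr| + 2p_- .
\]
The second term is the leakage onto the subdominant branch, $2p_- = 1 - \sqrt{1-\lambda^2} \le \lambda^2 \le h^2$ (the bound $1-\sqrt{1-x^2}\le x^2$ on $[-1,1]$ is elementary). The first term is the phase drift from replacing $\arcsin\lambda$ by $\lambda$, which by $|e^{\ii a}-e^{\ii b}|\le|a-b|$ is at most $\tau|\arcsin\lambda - \lambda|$; here I would invoke the sharp inequality $|\arcsin x - x| \le (\tfrac{\pi}{2}-1)|x|^3$ on $[-1,1]$, whose extremal case is $x = \pm1$ and which is exactly where the constant $\tfrac{\pi}{2}-1$ originates, giving $\tau|\arcsin\lambda-\lambda| \le (\tfrac{\pi}{2}-1)h^3\tau$. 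Adding the two contributions and using $|\lambda|\le h$ yields the claimed bound $h^2 + (\tfrac{\pi}{2}-1)h^3\tau = h^2\bigl(1 + (\tfrac{\pi}{2}-1)h\tau\bigr)$.

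I expect the main obstacle to be the exact evaluation of the block matrix element $w_\lambda$: one must carry both eigenbranches through the similarity transform by $\tfrac{1+\ii S}{\sqrt2}$ and, in particular, account carefully for the overall phases so that the heavy‑weight branch aligns with $e^{-\ii\tau\lambda}$ rather than its conjugate or a spurious power of $\ii$. The remaining analytic input — the cubic estimate for $\arcsin x - x$ with the precise constant $\tfrac{\pi}{2}-1$ — is elementary but must be proved sharply, since it is what pins down the coefficient in \eq{dynamicsestimate}; the degenerate blocks with $|\lambda| = 1$, where $V_\lambda$ collapses to one dimension, should be checked separately or recovered by continuity.
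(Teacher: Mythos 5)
Your proposal is correct and follows essentially the same route as the paper: both reduce the estimate to the two-dimensional blocks $\spn\{T|\lambda\rangle, ST|\lambda\rangle\}$, find that the two eigenbranches of $(\ii U)^\tau$ carry weights $\tfrac12\bigl(1\pm\sqrt{1-\lambda^2}\bigr)$, bound the subdominant contribution using $1-\sqrt{1-\lambda^2}\le\lambda^2$, and pay $(\tfrac{\pi}{2}-1)h^3\tau$ for replacing $\arcsin\lambda$ by $\lambda$ via $|1-e^{\ii\theta}|\le|\theta|$. Your packaging of the block computation as overlap probabilities $|\langle\mu_\pm|\chi_\lambda\rangle|^2$ with respect to the unitarily rotated state $\tfrac{1+\ii S}{\sqrt2}T|\lambda\rangle$ is a mild streamlining of the paper's direct expansion, and the phase bookkeeping you flag as the remaining obstacle is indeed the only delicate point.
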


\begin{proof}
By \thm{spectrum}, we have
\ba
  \ii \Pi U \Pi
  &= \sum_{\lambda,\pm} \mp e^{\pm\ii\arcsin\lambda}
     \frac{1-e^{\pm\ii\arccos\lambda}S}{\sqrt{2(1-\lambda^2)}}
     T|\lambda\>\<\lambda|T^\dag
     \frac{1-e^{\mp\ii\arccos\lambda}S}{\sqrt{2(1-\lambda^2)}}.
\ea
Then a direct calculation gives
\ba
  &T^\dag \frac{1-\ii S}{\sqrt 2} (\ii U)^\tau \frac{1+\ii S}{\sqrt 2} T \nonumber\\
  &\quad\begin{aligned}= \frac{1}{4} \sum_{\lambda,\pm}
     \frac{\mp e^{\pm\ii \tau \arcsin\lambda}}{1-\lambda^2}
     \begin{aligned}[t]
     T^\dag&[(1+\ii e^{\pm\ii\arccos\lambda})-(\ii+e^{\pm\ii\arccos\lambda})S]T
     \,|\lambda\>\<\lambda| \\
     T^\dag&[(1-\ii e^{\mp\ii\arccos\lambda})+(\ii-e^{\mp\ii\arccos\lambda})S]T
     \end{aligned}\end{aligned} \\
  &\quad\begin{aligned}= \frac{1}{4} \sum_{\lambda,\pm}
     \frac{\mp e^{\pm\ii \tau \arcsin\lambda}}{1-\lambda^2}
     \begin{aligned}[t]
     &[(1+\ii e^{\pm\ii\arccos\lambda})-(\ii+e^{\pm\ii\arccos\lambda})\lambda] \\
     &[(1-\ii e^{\mp\ii\arccos\lambda})+(\ii-e^{\mp\ii\arccos\lambda})\lambda]
     |\lambda\>\<\lambda|
     \end{aligned}\end{aligned} \\
  &\quad\begin{aligned}[t]= \frac{1}{2} \sum_\lambda
     [-e^{+\ii \tau \arcsin\lambda} (1-\sqrt{1-\lambda^2})
      +e^{-\ii \tau \arcsin\lambda} (1+\sqrt{1-\lambda^2})]
     |\lambda\>\<\lambda|.
   \end{aligned}
\ea
Now we can use the inequality $1-\sqrt{1-\lambda^2} \le \lambda^2$ to bound the norm of the first term by $\lambda^2/2$.  By the same inequality, $|(1+\sqrt{1-\lambda^2})-2|/2 \le \lambda^2/2$.  Thus we have
\be
  \Norm{T^\dag \frac{1-\ii S}{\sqrt 2} (\ii U)^\tau \frac{1+\ii S}{\sqrt 2} T - e^{-\ii \tau \arcsin(H/\norm{\abs(H)})}} \le h^2.
\ee
Using the inequalities
$
  |\lambda-\arcsin\lambda| \le (\tfrac{\pi}{2}-1)|\lambda|^3
$
and $|1-e^{-\ii\theta}| \le |{\theta}|$, we find
\be
  \Norm{e^{-\ii \tau \arcsin(H/\norm{\abs(H)})}-e^{-\ii \tau H/\norm{\abs(H)}}}
  \le (\tfrac{\pi}{2}-1)\tau h^3,
\ee
and \eq{dynamicsestimate} follows by the triangle inequality.
\end{proof}

To obtain an arbitrarily good approximation of the dynamics, we require a systematic means of making $h = \norm{H}/\norm{\abs(H)}$ arbitrarily small.  In other words, we must construct a \emph{lazy quantum walk}, analogous to the lazy random walk that only takes a step with some small probability $\epsilon$ as in \eq{lazyrandom}.  To do this, we enlarge the Hilbert space and modify the states $|\psi_j\>$ from \eq{jstates} to
\be
  |\psi_j^\epsilon\> := \sqrt{\epsilon} |\psi_j\> + \sqrt{1-\epsilon} |{\perp_j}\>
\ee
for some $\epsilon \in (0,1]$,
where $\{|{\perp_j}\>:j=1,\ldots,N\}$ are orthonormal states satisfying
\ba
 \<\psi_j|{\perp_k}\>=\<\psi_j|S|{\perp_k}\>=\<{\perp_j}|S|{\perp_k}\>=0
 \text{~for~} j,k=1,\ldots,N.
\label{eq:perpstates}
\ea
Correspondingly, we modify the isometry $T$ to $T_\epsilon := \sum_j|\psi_j^\epsilon\>\<j|$.
Using the fact that $\<\psi_j^\epsilon|H|\psi_k^\epsilon\> = \epsilon\<\psi_j|H|\psi_k\>$, it can be shown that Theorems~\ref{thm:spectrum} and \ref{thm:dynamics} still hold, but with $H$ replaced by $\epsilon H$, so that $\lambda$ is replaced by $\epsilon\lambda$ and $h$ is replaced by $\epsilon h$.
Note that it is not necessary for the states $|\psi_j^\epsilon\>$ to arise from some Hermitian operator as in \eq{jstates}.
Perhaps the simplest choice is to enlarge the Hilbert space from $\C^N \times \C^N$ to $\C^{N+1} \times \C^{N+1}$, and let $|{\perp_j}\> = |j,N+1\>$.

Overall, we obtain a procedure for simulating a continuous-time walk by a discrete-time one, as follows:
\begin{enumerate}[itemsep=0pt]
  \item Given an initial state $|\phi_0\> \in \spn\{|j\>: j=1,\ldots,N\}$, apply the isometry $T_\epsilon$ and the unitary operation $\frac{1+\ii S}{\sqrt{2}}$.
  \item Apply $\tau$ steps of the discrete-time quantum walk $U := \ii S(2T_\epsilon T_\epsilon^\dag-1)$.
  \item Project onto the basis of states $\{\frac{1+\ii S}{\sqrt 2}T_\epsilon|j\>: j=1,\ldots,N\}$.
\end{enumerate}
By \thm{dynamics} and the preceding discussion, the resulting outcomes are approximately distributed according to $\Pr(j) = |\<j|e^{-\ii \tau \epsilon H/\norm{\abs(H)}}|\phi_0\>|^2$ provided $\epsilon^2 h^2$ and $\epsilon^3 h^3 \tau$ are small.  To obtain a total evolution time $t$, we choose $\epsilon = \norm{\abs(H)}t/\tau$ (subject to the constraint $\epsilon \le 1$), where the accuracy can be improved by increasing the number of simulation steps $\tau$, hence decreasing $\epsilon$.

More precisely, suppose our goal is to simulate the continuous-time quantum walk according to $H$ for a total time $t$, obtaining a final state $|\phi_t\>$ satisfying $\norm{e^{-\ii H t}|\phi_0\>-|\phi_t\>} \le \delta$.  Then it suffices to take
\be
  \tau \ge
  \max\left\{\norm{H}t
             \sqrt{\frac{1+(\tfrac{\pi}{2}-1)\norm{H}t}{\delta}},
             \norm{\abs(H)}t\right\}.
\label{eq:simtime}
\ee
In other words, the complexity of the simulation is $O\big((\norm{H} t)^{3/2}/\sqrt\delta,\norm{\abs(H)}t\big)$.

Note that to implement the discrete-time quantum walk defined above, it must be possible to implement the isometry $T$ (and its inverse).  In particular, this requires computing ratios of nonzero components of the principal eigenvector $|d\>$, a global property of $\abs(H)$.  While this may be difficult in general, it is tractable in many cases of interest.  If the graph is regular, then the principal eigenvector is simply the uniform superposition; if it is non-regular but sufficiently structured, then it may be possible to compute $|d\>$ explicitly.  For some applications it may be sufficient to precompute the principal eigenvector, as in \cite{ACRSZ07,RS07}.

Alternatively, another option is to replace the states in \eq{jstates} by
\begin{align}
 |\psi_j^{\prime\epsilon}\>
 &:= \sqrt{\frac{\epsilon}{\norm{H}_1}} \sum_{k=1}^N \sqrt{H^*_{jk}} \, |j,k\> 
   + \sqrt{1-\epsilon\sum_{k=1}^N \frac{|H_{jk}|}{\norm{H}_1}}
     \, |{\perp_j}\>
\label{eq:jstates_abs}
\end{align}
for some $\epsilon \in (0,1]$, where $\norm{H}_1 := \max_j \sum_{k=1}^N |H_{jk}|$, and where the states $|{\perp_j}\>$ satisfy \eq{perpstates} with $\sum_{k=1}^N \sqrt{H^*_{jk}} |j,k\>$ playing the role of $|\psi_j\>$.  With this choice, no information about the principal eigenvector of $\abs(H)$ is needed to implement the isometry $T'_\epsilon := \sum_{j=1}^N |\psi_j^{\prime\epsilon}\>\<j|$, and we retain the necessary properties that $\<\psi_j^{\prime\epsilon}|\psi_k^{\prime\epsilon}\>=\delta_{j,k}$ and $\<\psi_j^{\prime\epsilon}|S|\psi_k^{\prime\epsilon}\> \propto H_{jk}$.  However, since the quantity $\norm{\abs(H)}$ is replaced by the maximum absolute column sum norm $\norm{H}_1\ge\norm{\abs(H)}$, the simulation may be less efficient (and in particular, still suffers from the sign problem discussed in \sec{signproblem}).

This simulation immediately shows that the exponential speedup by continuous-time quantum walk demonstrated in \cite{CCDFGS03} carries over to the discrete-time model.  (This is a case where the graph is slightly non-regular, but the principal eigenvector can be computed explicitly.)  On the other hand, the speedup demonstrated in \cite{CSV07} apparently does \emph{not} carry over, as $\norm{\abs(H)} \gg \norm{H}$ for the relevant Hamiltonian, and exponentially many steps of the discrete-time quantum walk would be required.  We will return to this issue in \sec{signproblem}.

The estimate \eq{simtime} may be overly pessimistic for some applications.  To emulate the dynamics of the entire Hamiltonian, we choose $\epsilon$ so that all eigenvalues are small.  But for algorithms that effectively work in a low-energy subspace (e.g., \cite{CG03,CG04}), or that only depend on a small spectral gap (e.g., \cite{ACRSZ07,RS07}), it may be feasible to use little or no rescaling.

\section{Simulating non-sparse Hamiltonians}
\label{sec:nonsparse}

We now turn to the problem of simulating Hamiltonian dynamics.
In this section, we review known results about the simulation of sparse Hamiltonians, introduce the problem of simulating Hamiltonians that are not necessarily sparse, and describe some preliminary results on the simulation of non-sparse Hamiltonians.

Given a description of some Hermitian matrix $H$, Hamiltonian simulation is the problem of implementing the unitary operator $e^{-\ii H t}$ by a quantum circuit for any desired value of $t$.
Efficient simulations of sparse Hamiltonians are well-known.
In the special case that $H$ is local---i.e., when it is a sum of terms, each acting on a constant number of qubits---efficient simulation is straightforward \cite{Llo96}.
More generally, a Hamiltonian can be simulated efficiently provided it is \emph{sparse} and \emph{efficiently row-computable}.  We say a Hermitian operator $H$ acting on $\C^N$ is sparse in the basis $\{|j\>: j=1,\ldots,N\}$ if for any $j \in \{1,\ldots,N\}$, there are at most $\poly(\log N)$ values of $k \in \{1,\ldots,N\}$ for which $\<j|H|k\>$ is nonzero.  We say it is efficiently row-computable if there is an efficient procedure for determining those values of $k$ together with the corresponding matrix elements.

The simulability of sparse Hamiltonians was first explicitly stated in \cite{AT03}; it also follows from \cite{CCDFGS03} together with classical results on local coloring of graphs \cite{Lin92}.  The main idea is as follows.  Suppose we color the edges of the graph of nonzero entries of $H$; then the subgraphs of any particular color consist of isolated edges, meaning that the evolution on any one of these subgraphs takes place in isolated two-dimensional subspaces, and is easily simulated.  The subgraphs can be recombined using approximations to the Lie product formula,
\be
  \lim_{n \to \infty} \big(e^{-\ii A t/n} e^{-\ii B t/n}\big)^n = e^{-\ii (A+B) t}.
\ee
As the lowest-order approximation, supposing $\norm{A},\norm{B} \le h$, we have
\be
  \Norm{\big(e^{-\ii A t/n} e^{-\ii B t/n}\big)^n - e^{-\ii (A+B) t}} = O((ht)^2/n),
\ee
which shows that $n=O((ht)^2/\delta)$ simulation steps suffice to achieve error at most $\delta$.  Similarly, at second order,
\be
  \Norm{\big(e^{-\ii A t/2n} e^{-\ii B t/n} e^{-\ii A t/2n}\big)^n - e^{-\ii (A+B) t}} = O((ht)^3/n^2),
\ee
so that $n=O((ht)^{3/2}/\sqrt\delta)$ simulation steps suffice.
With approximations of increasingly high order, a simulation can be performed in nearly linear time \cite{BACS05,Chi04}.  Ultimately, we have

\begin{theorem} \label{thm:sparse}
Suppose the graph of nonzero entries of $H$ has $N$ vertices and maximum degree $d$, and that $H$ is efficiently row-computable, with $|\<j|H|k\>|\le h$ for all $j,k\in\{1,\ldots,N\}$.  Then evolution according to $H$ for time $t$ with error at most $\delta$ can be simulated in $ht(ht/\delta)^{o(1)} \cdot \poly(d,\log N)$ steps.
\end{theorem}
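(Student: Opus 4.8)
The plan is to carry out in detail the product-formula strategy sketched in the paragraphs preceding the theorem: decompose $H$ into a small number of individually simulable pieces, each of which is a direct sum of $2\times2$ blocks, and then recombine them with a high-order approximation to the Lie product formula. The first step is the decomposition. I would write $H = D + \sum_{c=1}^m H_c$, where $D = \sum_j H_{jj}\,|j\>\<j|$ is the diagonal part and each $H_c$ collects the off-diagonal entries lying in color class $c$ of a proper edge-coloring of the graph of nonzero entries. Since that graph has maximum degree $d$, Vizing's theorem furnishes a coloring with $d+1$ colors, but this is not locally computable; instead I would invoke local coloring \cite{Lin92}, which yields a proper edge-coloring with $m = O(d^2)$ colors whose value on a given edge $(j,k)$ depends only on the radius-$O(\log^* N)$ neighborhood of $j$ and can thus be computed with $\poly(d,\log N)$ queries to the row-computation oracle. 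Each color class is a matching, so $H_c$ is a direct sum of commuting $2\times2$ Hermitian blocks on disjoint pairs $\{j,k\}$, and $\norm{H_c}=\max|H_{jk}|\le h$, and likewise $\norm{D}\le h$.

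The second step is to simulate each piece. For a fixed color $c$, I would implement $e^{-\ii H_c s}$ by using the oracle and the coloring to reversibly compute into an ancilla the color-$c$ partner $k$ of the current vertex $j$, applying to $\spn\{|j\>,|k\>\}$ the $2\times2$ rotation determined by the oracle-provided value $H_{jk}$, and then uncomputing the ancilla; this costs $\poly(d,\log N)$ elementary gates and oracle calls. The diagonal piece $e^{-\ii D s}$ is simulated directly, being diagonal and row-computable.

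The third step is the recombination. I would apply the order-$2k$ Suzuki product formula to the $m+1$ terms $\{D,H_1,\dots,H_m\}$, splitting the total time into $n$ segments. Each segment uses $O(5^k m)$ of the exponentials from the second step, and the standard error estimate for the order-$2k$ formula gives $\Norm{(\mathrm{formula})^n - e^{-\ii H t}} = O\big((\Lambda t)^{2k+1}/n^{2k}\big)$ with $\Lambda := \norm{D}+\sum_c \norm{H_c} = O(mh)$, generalizing the second-order bound $O((ht)^3/n^2)$ displayed above. Forcing this error below $\delta$ requires $n = O\big(\Lambda t\,(\Lambda t/\delta)^{1/2k}\big)$, so the total number of elementary steps is
\[
  O(5^k m)\cdot n = \poly(d,\log N)\cdot (ht)\cdot 5^k\,(ht/\delta)^{1/2k},
\]
where every factor of $m=O(d^2)$ has been absorbed into $\poly(d,\log N)$. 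Optimizing the order as $k \sim \sqrt{\tfrac12\log_5(ht/\delta)}$ makes $5^k (ht/\delta)^{1/2k} = (ht/\delta)^{o(1)}$, which yields the claimed bound of $ht\,(ht/\delta)^{o(1)}\cdot\poly(d,\log N)$ steps.

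The main obstacle is the third step. Establishing the order-$2k$ error estimate uniformly in $k$ requires the nested-commutator (Taylor-remainder) analysis underlying Suzuki's recursive construction, and care is then needed in the order optimization to extract precisely the subpolynomial factor $(ht/\delta)^{o(1)}$ rather than a spurious polynomial overhead. By contrast, the coloring and the per-term circuit constructions of the first two steps, while the conceptually delicate part, are routine once the efficient row-computability hypothesis is in hand.
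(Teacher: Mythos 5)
Your proposal is correct and follows exactly the route the paper itself indicates: the paper does not prove \thm{sparse} but only sketches it (edge-coloring via local coloring \cite{Lin92} so that each color class is a matching of independently simulable $2\times2$ blocks, recombined by high-order Lie--Suzuki product formulas) and defers to \cite{AT03,BACS05,Chi04}, and your writeup is a faithful, correctly executed expansion of that sketch, including the order optimization $k\sim\sqrt{\log(ht/\delta)}$ that yields the $(ht/\delta)^{o(1)}$ factor. No substantive differences to report.
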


It is natural to ask under what conditions we can simulate a \emph{non-sparse} Hamiltonian.  Notice that if a Hamiltonian is not sparse, then it cannot be efficiently row-computable, simply because we cannot write down the nonzero entries of a row in polynomial time.  However, we can still suppose that the Hamiltonian has a succinct description of some kind and ask whether an efficient simulation is possible.

One way to simulate the dynamics of a non-sparse Hamiltonian is to use information about its spectrum.  By the simple identity $e^{-\ii H t} = U e^{-\ii U^\dag H U t} U^\dag$, we can simulate $H=UDU^\dag$ (with $D$ diagonal) provided we can efficiently perform a unitary transformation $U$ mapping the standard basis vector $|j\>$ to the $j$th eigenvector of $H$, and efficiently compute the $j$th eigenvalue of $H$, under some canonical ordering.  For example, this approach can be used when $H$ is the adjacency matrix of a complete graph, a complete bipartite graph, or a star graph (and even for some more complicated cases, e.g., the Winnie Li graph in odd dimensions \cite{CSV07}), all of which have high maximum degree.  However, it is unclear how broadly this strategy can be applied.

Here, we are interested in simulations based on the graph structure of the Hamiltonian in a fixed basis.
To begin, suppose the Hamiltonian is the adjacency matrix of an $N$-vertex graph $G$.  We say that such a Hamiltonian is \emph{efficiently index-computable} if, for any $j \in \{1,\ldots,N\}$, it is possible to efficiently compute $\deg(j)$ and to compute the $k$th neighbor of $j$ (under some canonical ordering) for any index $k \in \{1,\ldots,\deg(j)\}$.  The discrete-time quantum walk on an unweighted graph $G$ is straightforward to implement with such a description.  In particular, given a black box for the degree and the $j$th neighbors of any given vertex, a step of the discrete-time walk on $G$ can be simulated using only four queries (to compute the degree, compute a neighbor, uncompute the neighbor, and uncompute the degree).
More generally, we can consider graphs with edge weights under certain conditions mentioned below.

As a preliminary observation, we can obtain efficient simulations of some non-sparse Hamiltonians by generalizing the notion of edge coloring.  Suppose we have a decomposition of the graph of nonzero entries of the Hamiltonian into a union of polynomially many subgraphs, each of which is a disjoint union of simulable subgraphs; then we can efficiently simulate the dynamics using the Lie product formula.  \thm{sparse} can be viewed as the special case where the simulable subgraphs are single edges.  As a novel example of this technique, there is an efficient simulation of Hamiltonians whose graphs are trees:

\begin{theorem} \label{thm:trees}
Suppose the graph of nonzero entries of $H$ is a rooted, efficiently index-computable tree, with the parent of any vertex having a known index.  Furthermore, suppose that the weights on the edges from any given vertex are \emph{integrable}, in the sense of \cite{GR02} (which holds trivially for an unweighted graph), and that for any vertex we can efficiently compute the distance to the root.  Then the evolution according to $H$ for time $t$ can be simulated in $O((ht)^{1+o(1)})$ steps, where $h := \max_j \sqrt{\sum_k |\<k|H|j\>|^2}$. 
\end{theorem}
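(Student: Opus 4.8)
The plan is to apply the edge-coloring and Lie-product strategy described just above the statement, taking \emph{stars} (a vertex together with its incident tree edges) as the simulable subgraphs. The key combinatorial observation is that the edges of a rooted tree are partitioned by their shallower endpoint: assigning each edge to the star centered at the \emph{parent} of its two endpoints, every edge lies in exactly one such star, and distinct stars are edge-disjoint. Two stars share a vertex only when their centers are adjacent in the tree (a common child is impossible, since each vertex has a unique parent), so the ``conflict graph'' on the centers is the tree itself; as a tree is bipartite, two colors suffice. Concretely, I would color each vertex by the parity of its distance to the root, which we may compute by hypothesis, and let the class $H_c$ for $c\in\{0,1\}$ collect all stars whose center has level-parity $c$. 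Within a single class the stars are vertex-disjoint, because two centers of equal parity are never parent and child. Thus $H=H_0+H_1$ with each $H_c$ block-diagonal, its blocks being weighted stars.

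Next I would show that $e^{-\ii H_c t}$ is efficiently implementable. Each block is a weighted star with center $j$ and off-diagonal weights $w_k=\<k|H|j\>$; setting $g_j:=\sqrt{\sum_k|w_k|^2}$ and $|\chi_j\>:=g_j^{-1}\sum_k w_k|k\>$, one checks from $H=H^\dag$ that the star Hamiltonian acts as $g_j(|j\>\<\chi_j|+|\chi_j\>\<j|)$ on the plane $\spn\{|j\>,|\chi_j\>\}$ and annihilates its orthogonal complement, so its propagator is a rotation through angle $g_j t$ in that plane. Because the blocks are vertex-disjoint, all these rotations can be carried out simultaneously in superposition. To realize them uniformly over all centers, I would use index-computability to locate, for any vertex, its parent and its children, and then invoke the integrability hypothesis: by the Grover--Rudolph construction \cite{GR02}, the efficiently integrable edge weights let us coherently prepare (and unprepare) the weighted neighbor superposition $|\chi_j\>$, reducing each star to a single controlled two-level rotation. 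Since the norm of each block is $g_j=\sqrt{\sum_k|\<k|H|j\>|^2}\le h$, we obtain $\norm{H_c}\le h$ for both classes.

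Finally, I would recombine the two classes. With $\norm{H_0},\norm{H_1}\le h$, the high-order product-formula simulations of \cite{BACS05,Chi04} approximate $e^{-\ii(H_0+H_1)t}=e^{-\ii H t}$ to any fixed accuracy using $(ht)^{1+o(1)}$ alternating applications of $e^{-\ii H_0 s}$ and $e^{-\ii H_1 s}$, each of which costs only $\poly(\log N)$ elementary operations by the previous paragraph. This yields the claimed $O((ht)^{1+o(1)})$ bound, exactly paralleling \thm{sparse} but with single edges replaced by stars.

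I expect the main obstacle to be the uniform, reversible preparation of the neighbor superpositions $|\chi_j\>$ from only the black-box data. The partition into two star-classes and the reduction of each star to a planar rotation are clean and combinatorial; the real work is in verifying that the integrability condition of \cite{GR02} genuinely suffices to build $|\chi_j\>$ coherently across all centers at once, with the complex phases of the weights handled correctly and the auxiliary neighbor-index registers uncomputed so that the net operation is a clean unitary on the vertex space. A secondary, easier point to settle is the treatment of any diagonal entries of $H$: these constitute a commuting diagonal term that can be split off and simulated separately at negligible additional cost.
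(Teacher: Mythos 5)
Your proof follows the paper's argument essentially verbatim: the same decomposition of the tree into two vertex-disjoint forests of stars indexed by the parity of the distance to the root, the same reduction of each weighted star to a two-level rotation through angle $t\sqrt{\sum_k|w_k|^2}$ implemented via the Grover--Rudolph construction, and the same recombination by a high-order product formula with each piece of norm at most $h$. The details you flag as remaining work (coherent, reversible preparation of the neighbor superpositions; handling of diagonal entries) are exactly those the paper also delegates to the cited techniques, so nothing essential is missing.
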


\begin{proof}
Consider a decomposition of the tree into two forests of stars.  The first forest contains the complete star around the root and stars centered at all vertices an even distance from the root, including all child vertices (but not the parent vertex) in each star.  The second forest contains the stars around vertices an odd distance from the root, again including all child vertices but not the parent vertex.  The result is a decomposition $H=H_1+H_2$, where the graphs of nonzero entries of $H_1$ and $H_2$ are both forests of stars.

A single star with center vertex $0$ and weight $w_i$ on the edge $(0,i)$ for $i=1,\ldots,k$ has exactly two nonzero eigenvalues, $\lambda_\pm := \pm\sqrt{|w_1|^2+\cdots+|w_k|^2}$, with corresponding normalized eigenvectors
\ba
  \frac{1}{\sqrt2} \bigg( |0\> + \frac{1}{\lambda_\pm} \sum_{i=1}^k w_i |i\> \bigg).
\ea
To transform from the standard basis to the eigenbasis of the star, we can perform the operation $|0\> \mapsto (|0\>+|1\>)/\sqrt{2}$, $|1\> \mapsto (|0\>-|1\>)/\sqrt{2}$, $|i\> \mapsto |i\>$ for $i=2,\ldots,k$, followed by any operation satisfying $|0\> \mapsto |0\>$ and $|1\> \mapsto \sum_i w_i |i\>/\lambda_+$; the latter can be implemented by the techniques of \cite{GR02}.

By applying this simulation for all disjoint stars in one of the forests, we can simulate the dynamics of either $H_1$ or $H_2$.  Finally, using a high-order approximation of the Lie product formula, we can simulate the full Hamiltonian $H$ with nearly linear overhead.
\end{proof}

It is not clear how to extend this strategy to general graphs.  However, we will see next that the correspondence with discrete-time quantum walk gives a broadly applicable simulation method.

\section{Linear-time Hamiltonian simulation}
\label{sec:linear}

We now present the main result of the paper, a method of simulating Hamiltonian dynamics that exploits the connection between continuous- and discrete-time quantum walk.
One such approach is to simply apply \thm{dynamics}, which states that any Hamiltonian dynamics can be well-approximated by a corresponding discrete-time process.  However, with that approach, the number of steps used to simulate evolution for time $t$ scales as $t^{3/2}$.
Instead, we can obtain a linear-time simulation by combining \thm{spectrum} with phase estimation.  By the no fast-forwarding theorem \cite{BACS05}, this scaling is optimal, even for sparse Hamiltonians.  However, we do not require the Hamiltonian to be sparse.

The simulation of an $N \times N$ Hamiltonian $H$ proceeds as follows.
Given an input state
\be
  |\psi\> = \sum_\lambda \psi_\lambda |\lambda\>
  \label{eq:inputstate}
\ee
in $\C^N$, where $|\lambda\>$ denotes an eigenvector of $H/\norm{\abs(H)}$ with eigenvalue $\lambda$, apply the isometry $T$ defined in \eq{isometry} to create the state $T|\psi\> \in \C^N \otimes \C^N$.  This state may be written
\ba
  T|\psi\> &= \sum_\lambda \psi_\lambda T|\lambda\> \\
  &= \sum_\lambda \psi_\lambda \Big( \textstyle
      \frac{1-\lambda e^{-\ii\arccos\lambda}}{\sqrt{2(1-\lambda^2)}}|\mu_+\>
     +\frac{1-\lambda e^{ \ii\arccos\lambda}}{\sqrt{2(1-\lambda^2)}}|\mu_-\>
     \Big),
     \label{eq:inputstateisom}
\ea
where we have used \eq{discreteqwalkevec} to write each $T|\lambda\>$ in terms of the corresponding eigenvectors $|\mu_\pm\>$ of the discrete-time quantum walk $U$ corresponding to $H$.

Our goal is to introduce a phase $e^{-\ii\lambda t}$ for the $\lambda$ term of this superposition.  To this end, perform coherent phase estimation on the discrete-time quantum walk $U$ corresponding to $H$.  Recall from \thm{spectrum} that the eigenvalues of $U$ corresponding to the eigenvalue $\lambda$ of $H/\norm{\abs{H}}$ are $\pm e^{\pm \ii\arcsin\lambda}$
Of course, we are not directly interested in $\arcsin(\lambda)$, but it implicitly determines $\lambda$.
Given an estimate $\tilde\lambda \approx \lambda$, we induce the phase $e^{-\ii \tilde\lambda t}$, uncompute $\tilde\lambda$ by performing phase estimation in reverse, and finally apply the inverse isometry $T^\dag$.
Overall, we claim that this procedure achieves the following:

\begin{theorem}\label{thm:linearsim}
  A Hamiltonian $H$ can be simulated for time $t$ with fidelity at least $1-\delta$ using $O(\norm{\abs(H)}t/\sqrt\delta)$ steps of the corresponding discrete-time quantum walk \eq{discreteqwalk}.
\end{theorem}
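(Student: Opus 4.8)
The plan is to realize $e^{-\ii H t}$ as coherent phase estimation on the walk operator $U$: extract each eigenphase of $U$, convert it back to the corresponding eigenvalue of $H$, and imprint the time-evolution phase by hand. I would carry out the six steps sketched just before the statement. First, apply the isometry $T$ to the input $|\psi\>$ of \eq{inputstate}, producing $T|\psi\>$ as written in \eq{inputstateisom}. Second, run coherent phase estimation on $U$, recording an estimate $\tilde\theta$ of the eigenphase. Third, from $\tilde\theta$ compute $\tilde\lambda := \sin\tilde\theta$ as an estimate of the eigenvalue $\lambda$ of $H/\norm{\abs(H)}$. Fourth, apply the phase $e^{-\ii\tilde\lambda\norm{\abs(H)}t}$ to the state. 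Fifth, run phase estimation in reverse to disentangle and erase the estimate register. Sixth, apply $T^\dag$. Since $T^\dag T = 1$, the ideal net effect on the $\lambda$-component is $|\lambda\>\mapsto e^{-\ii\lambda\norm{\abs(H)}t}|\lambda\> = e^{-\ii Ht}|\lambda\>$, because $H|\lambda\> = \lambda\norm{\abs(H)}|\lambda\>$.

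A preliminary point to check is that the eigenphase of $U$ determines $\lambda$ without a branch ambiguity. By \thm{spectrum}, $T|\lambda\>$ is supported on $|\mu_\pm\>$, whose eigenvalues $\pm e^{\pm\ii\arcsin\lambda}$ have eigenphases $\arcsin\lambda$ and $\pi-\arcsin\lambda$. The point is that $\sin$ of either eigenphase equals $\lambda$, so phase estimation need not resolve which branch it sees: on both, the rule $\tilde\lambda=\sin\tilde\theta$ targets the same $\lambda$, and the phase applied in the fourth step is identical on both components. This is the sense in which, as noted above, $\arcsin\lambda$ ``implicitly determines'' $\lambda$.

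The heart of the argument is the error analysis. If phase estimation returns $\tilde\theta$ with $|\tilde\theta-\theta|\le\eta$, then since $\sin$ is $1$-Lipschitz we have $|\tilde\lambda-\lambda|\le\eta$, so the imprinted phase differs from the ideal one by at most $\eta\norm{\abs(H)}t$. Applying this uniformly across the spectral decomposition \eq{inputstate} and comparing to $e^{-\ii Ht}|\psi\>$, a uniform phase error of $\eta\norm{\abs(H)}t$ degrades the overlap by at most $O\big((\eta\norm{\abs(H)}t)^2\big)$, so fidelity $1-\delta$ follows once $\eta=O\big(\sqrt\delta/(\norm{\abs(H)}t)\big)$. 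Since phase estimation attains eigenphase precision $\eta$ using $O(1/\eta)$ applications of $U$, this yields the claimed $O(\norm{\abs(H)}t/\sqrt\delta)$ walk steps.

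I expect the main obstacle to be turning this estimate into a rigorous bound, because phase estimation does not return $\tilde\theta$ deterministically within $\eta$ of $\theta$: with small probability it produces a far-off estimate, and in the coherent, uncomputed setting such tails survive as residual amplitude carrying the wrong phase. I would control this by using a phase-estimation variant whose total tail amplitude is provably small---boosting confidence with a modest number of extra ancillas or a median-of-estimates step---so that the tail contribution to the infidelity is dominated by the $O\big((\eta\norm{\abs(H)}t)^2\big)$ term and the leading step count is preserved. I would also verify that the reverse phase estimation cleanly disentangles the estimate register up to this same error, so that the final $T^\dag$ acts on a state lying, up to $O(\sqrt\delta)$, in the image of $T$; here the exact $U$-invariance of the two-dimensional subspaces $\spn\{T|\lambda\>,ST|\lambda\>\}$ guaranteed by \thm{spectrum} is what makes the uncomputation clean.
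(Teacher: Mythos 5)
Your construction is the same as the paper's: apply $T$, run coherent phase estimation on $U$, imprint $e^{-\ii t\sin\phi}$, uncompute, apply $T^\dag$; and your observation that both eigenphases $\arcsin\lambda$ and $\pi-\arcsin\lambda$ give the same value of $\sin$ (so the branch need not be resolved) is exactly the point the paper relies on. The reduction of the fidelity loss to the second moment of the phase-estimate error via $|\sin\theta-\sin\phi|\le|\theta-\phi|$ and $\cos x\ge 1-x^2/2$ also matches the paper's chain of inequalities leading to \eq{simfidelity}.

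The gap is in how you control the tails, and it is not a formality --- it is where the linear-in-$t$ scaling is actually won or lost. The quantity you need to bound is precisely the variance $\sum_\phi(\theta-\phi)^2|\amp{\phi}{\theta}|^2$ of the coherent estimate. For \emph{standard} phase estimation (uniform superposition over the control register), the conditional distribution has tails decaying only like $\Delta^{-2}$, so this variance is $\Theta(1/M)$ rather than $O(1/M^2)$; plugged into your bound this gives fidelity $1-O(t^2\norm{\abs(H)}^2/M)$ and hence $M=O(\norm{\abs(H)}^2t^2/\delta)$ --- quadratic in $t$, no better than \thm{dynamics}. Your proposed fix, median-of-estimates amplification, drives the tail \emph{probability} down to $p$ at a multiplicative cost of $\Theta(\log(1/p))$ repetitions; since a far-off estimate can flip the sign of the imprinted phase, you need $p=O(\delta)$, and you end up with $O(\norm{\abs(H)}t\log(1/\delta)/\sqrt\delta)$ steps, a logarithmic factor short of the claim. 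The missing idea in the paper is to replace the uniform input state of the estimation register by the variance-minimizing state \eq{pestvarstate} (the ``sine state'' of \cite{BDM99,DDEMM07b,LP96}), whose conditional distribution has $\Delta^{-4}$ tails; a direct computation then gives $\sum_\phi(\theta-\phi)^2|\amp{\phi}{\theta}|^2\le 186/M^2$ with no separate tail argument, and $M=O(t/\sqrt\delta)$ follows. (Your worry about the reverse phase estimation disentangling cleanly is moot in the paper's treatment: it bounds the fidelity by directly computing the overlap $\<\psi|e^{\ii Ht}T^\dag P^\dag F_t P T|\psi\>$, which accounts for all residual entanglement automatically.)
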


\begin{proof}
Let $|\theta\>$ denote an eigenstate of the discrete-time quantum walk with eigenvalue $e^{\ii\theta}$.  Let $P$ denote the isometry that performs phase estimation on this walk, appending a register with an estimate of the phase, as follows:
\ba
  P|\theta\> = \sum_\phi \amp{\phi}{\theta} |\theta,\phi\>.
\ea
Here $\amp{\phi}{\theta}$ is the amplitude for the estimate $\phi$ when the eigenvalue is in fact $\theta$; in particular, $\Pr(\phi|\theta)=|\amp{\phi}{\theta}|^2$.  Let $F_t$ be the unitary operation that applies the desired phase, namely
\ba
  F_t|\theta,\phi\> = e^{-\ii t \sin\phi}|\theta,\phi\>.
\ea
Then our simulation of the Hamiltonian evolution $e^{-\ii H t}$ is $T^\dag P^\dag F_tPT$.

Given an input state $|\psi\>$ as in \eq{inputstate}, we compute the inner product between the ideal state $e^{-\ii H t} |\psi\>$ and the simulated state $T^\dag P^\dag F_t P T|\psi\>$.  Similarly to \eq{inputstateisom}, we have
\ba
  P T e^{-\ii H t} |\lambda\>
  &= e^{-\ii \lambda t} P T |\lambda\> \\
  &= e^{-\ii \lambda t} P \Big( \textstyle
      \frac{1-\lambda e^{-\ii\arccos\lambda}}{\sqrt{2(1-\lambda^2)}}|\mu_+\>
     +\frac{1-\lambda e^{ \ii\arccos\lambda}}{\sqrt{2(1-\lambda^2)}}|\mu_-\> 
     \Big) \\
  &\begin{aligned}= e^{-\ii \lambda t} \sum_\phi
     \!\!\!\!\!\!\begin{aligned}[t]
     \Big(\amp{\phi}{\arcsin\lambda} \textstyle
      \frac{1-\lambda e^{-\ii\arccos\lambda}}{\sqrt{2(1-\lambda^2)}}
      &|\mu_+,\phi\> \\
           +\, \amp{\phi}{\pi-\arcsin\lambda} \textstyle
      \frac{1-\lambda e^{ \ii\arccos\lambda}}{\sqrt{2(1-\lambda^2)}}
      &|\mu_-,\phi\> \Big),
    \end{aligned} \end{aligned}
\ea
and
\ba
  \begin{aligned}F_t P T |\lambda\>
    = \sum_\phi e^{-\ii t \sin\phi}
     \!\!\!\!\!\!\begin{aligned}[t]
     \Big(\amp{\phi}{\arcsin\lambda} \textstyle
      \frac{1-\lambda e^{-\ii\arccos\lambda}}{\sqrt{2(1-\lambda^2)}}
      &|\mu_+,\phi\> \\
        +\, \amp{\phi}{\pi-\arcsin\lambda} \textstyle
      \frac{1-\lambda e^{ \ii\arccos\lambda}}{\sqrt{2(1-\lambda^2)}}
      &|\mu_-,\phi\> \Big).
    \end{aligned} \end{aligned}
\ea
Therefore, using orthonormality of the eigenstates, we have
\ba
  \<\psi|e^{\ii H t} T^\dag P^\dag F_t P T|\psi\>
  &\begin{aligned}[t]= \sum_\lambda |\psi_\lambda|^2 \<\lambda|e^{\ii H t} T^\dag P^\dag F_t P T|\lambda\>\end{aligned} \\
  &\begin{aligned}
   = \sum_{\lambda,\phi} |\psi_\lambda|^2 e^{\ii(\lambda-\sin\phi)t}
     \!\!\!\!\!\!\!\!\begin{aligned}[t]
     \Big(
     |\amp{\phi}{\arcsin\lambda}|^2 & \Big| \textstyle
     \frac{1-\lambda e^{-\ii\arccos\lambda}}{\sqrt{2(1-\lambda^2)}}
     \Big|^2 \\
   +|\amp{\phi}{\pi-\arcsin\lambda}|^2 & \Big| \textstyle
     \frac{1-\lambda e^{\ii\arccos\lambda}}{\sqrt{2(1-\lambda^2)}}
     \Big|^2 \Big) \end{aligned} \end{aligned} \\
  &\begin{aligned}[t]= \sum_{\lambda,\phi} |\psi_\lambda|^2 e^{\ii(\lambda-\sin\phi)t}
     \frac{|\amp{\phi}{\arcsin\lambda}|^2 
          +|\amp{\phi}{\pi-\arcsin\lambda}|^2}{2}.
    \end{aligned}
\ea
Thus the fidelity of the simulation is
\ba
  |\<\psi|e^{\ii H t} T^\dag P^\dag F_t P T|\psi\>|
  &\ge \min_\lambda \sum_\phi \cos\big((\lambda-\sin\phi)t\big)
     \frac{|\amp{\phi}{\arcsin\lambda}|^2 
          +|\amp{\phi}{\pi-\arcsin\lambda}|^2}{2} \\
  &\ge \min_{\theta \in [0,2\pi)} \sum_\phi \cos\big((\sin\theta-\sin\phi)t\big)
     |\amp{\phi}{\theta}|^2 \\
  &\ge 1-\min_{\theta \in [0,2\pi)} \frac{1}{2} \sum_\phi
       \big((\sin\theta-\sin\phi)t\big)^2 |\amp{\phi}{\theta}|^2 \\
  &\ge 1-\frac{t^2}{2}\min_{\theta \in [0,2\pi)} \sum_\phi
       (\theta-\phi)^2 |\amp{\phi}{\theta}|^2,
\label{eq:simfidelity}
\ea
where in the last step we have used the fact that $|{\sin\theta-\sin\phi}| \le |{\theta-\phi}|$.

Now, to show that the fidelity is close to $1$, we quantify the performance of phase estimation.  To optimize the bound, we must choose the phase estimation procedure carefully.  In standard phase estimation modulo $M$ (using $M-1$ calls to the unitary operation whose phase is being estimated), we begin with the uniform superposition $\frac{1}{\sqrt M}\sum_{x=0}^{M-1}|x\>$ in the register used to estimate the phase.
However, other input states can give better estimates depending on the application.  In particular, the initial state
\ba
  \sqrt{\frac{2}{M+1}}\sum_{x=0}^{M-1} \sin\frac{\pi(x+1)}{M+1}|x\>
  \label{eq:pestvarstate}
\ea
minimizes the variance of the estimate \cite{BDM99,DDEMM07b,LP96}, so it gives the best possible bound in \eq{simfidelity}.
This initial state leads to the probability distribution
\ba
  |a_{\theta+\Delta|\theta}|^2
  &= \frac{\cos^2(\Delta\frac{M+1}{2}) \sin^2(\frac{\pi}{M+1})}
          {2M(M+1)\sin^2(\frac{\Delta}{2}+\frac{\pi}{2(M+1)})
                  \sin^2(\frac{\Delta}{2}-\frac{\pi}{2(M+1)})},
\ea
where the estimated phase is $\theta+\Delta = 2 \pi j/M$ for some integer $j$.  Of course, only the value of $j \bmod M$ is significant, and we can choose the range of angles so that $\Delta=\Delta_0 + 2\pi j/M$, where $0 \le \Delta_0 < 2\pi/M$ and where the integer $j$ satisfies $ -\ceil{M/2}+1 \le j \le \floor{M/2}$.  Then for sufficiently large $M$,
\ba
  |a_{\theta+\Delta|\theta}|^2
  &\le \frac{\pi^2}
            {2M^4 \sin^2(\frac{\Delta}{2}+\frac{\pi}{2(M+1)})
                  \sin^2(\frac{\Delta}{2}-\frac{\pi}{2(M+1)})} \\
  &\le \frac{128\pi^2}
            {M^4 \Delta^4 (1-\frac{\pi^2}{\Delta^2 (M+1)^2})^2} \\
  &\le \frac{512\pi^2}{9M^4\Delta^4},
\ea
where the last step assumes that $\Delta \ge 2\pi/M$.
Since $2\pi j/M \le \Delta \le 2\pi(j+1)/M$,
\ba
  \sum_\phi (\theta-\phi)^2 |a_{\phi|\theta}|^2
  &\le \frac{4\pi^2}{M^2}+2\sum_{j=1}^\infty \frac{128(j+1)^2}{9M^2j^4} \\
  &= \frac{4\pi^2}{M^2}+\frac{256}{9M^2}\bigg(\frac{15\pi^2 + \pi^4 + 180\zeta(3)}{90} \bigg)\\
  &\le \frac{186}{M^2}.
\ea
Using this bound in \eq{simfidelity}, we obtain a simulation fidelity of at least $1-93t^2/M^2$.  Thus we find that $M=O(t/\sqrt\delta)$ steps suffice to obtain fidelity at least $1-\delta$.
\end{proof}

Of course, to carry out the simulation described in \thm{linearsim}, we must be able to implement the discrete-time quantum walk.
As discussed at the end of \sec{limit}, this may be difficult in general due to the dependence of the states \eq{jstates} on the principal eigenvector of $\abs(H)$.  However, it is straightforward to implement the walk for many cases of interest, such as for an unweighted regular graph.  Indeed, the discrete-time quantum walk can be carried out efficiently provided only that the Hamiltonian is efficiently index-computable, and that the weights appearing in the states \eq{jstates} are integrable in the sense of \cite{GR02}.  Alternatively, to avoid introducing a dependence on a principal eigenvector, one can use \eq{jstates_abs} at the expense of replacing $\norm{\abs(H)}$ by $\norm{H}_1$.

For Hamiltonians whose graphs are trees, the simulations of Theorems \ref{thm:trees} and \ref{thm:linearsim} are incomparable.  On the one hand, the simulation of \thm{trees} runs in slightly superlinear time.  On the other hand, the simulation of \thm{linearsim} scales with $\norm{\abs(H)}$,
whereas \thm{trees} only scales with $\max_j \sqrt{\sum_k |\<k|H|j\>|^2}$, which may be smaller.

\section{Element distinctness}
\label{sec:ed}

This section gives an application of \thm{linearsim} to quantum query complexity: a continuous-time quantum walk algorithm for the element distinctness problem.  The intention is to formulate and analyze the algorithm entirely in terms of its Hamiltonian, but to ultimately quantify its complexity in terms of conventional quantum queries.  The relationship of this algorithm to Hamiltonian-based models of query complexity is discussed at the end of \sec{hamquery}.

In the element distinctness problem, we are given a black-box function $f:\{1,\ldots,N\}\to S$ (for some finite set $S$) and are asked to determine whether there are two indices $x,y \in \{1,\ldots,N\}$ such that $f(x)=f(y)$.  Ambainis found a discrete-time quantum walk algorithm that solves this problem with $O(N^{2/3})$ queries \cite{Amb07}, which is optimal \cite{AS04}.  Unlike quantum algorithms for search on low-degree graphs \cite{AKR04,CG03,CG04,Tul08}, no continuous-time analog of the element distinctness algorithm has been known: the walk takes place on a high-degree Johnson graph, and sparse Hamiltonian techniques are insufficient to implement it.  Since Ambainis's algorithm appeared, it has been an open question to find a continuous-time version (see for example \cite[Section 5]{CE03}).  We now describe a continuous-time quantum walk that, when simulated using \thm{linearsim}, gives an $O(N^{2/3})$-query quantum algorithm for element distinctness.

As in \cite{Amb07}, the algorithm uses a walk on the Johnson graph $J(N,M)$.  The vertices of this graph are the $\binom{N}{M}$ subsets of $\{1,\ldots,N\}$ of size $M$; edges connect subsets that differ in exactly one element.  Let $M:=\nint{N^{2/3}}$, the nearest integer to $N^{2/3}$.

To simplify the analysis, suppose there is a unique pair of indices $x,y$ for which $f(x)=f(y)$.  By a classical reduction, this assumption is without loss of generality \cite{Amb07}.
Let $A_j$ denote the set of $M$-element subsets of $\{1,\ldots,N\}$ that contain $j$ elements from $\{x,y\}$.  We use the convention that a set $S$ written in a ket denotes the uniform superposition $|S\> := \sum_{s \in S} |s\>/\sqrt{|S|}$ over the elements of $S$.  Then in the basis $\{|A_0\>,|A_1\>,|A_2\>\}$, the adjacency matrix of $J(N,M)$ is
\be
\begin{pmatrix}
  -2M & \sqrt{2M(N-M-1)} & 0 \\
  \sqrt{2M(N-M-1)} & 2-N & \sqrt{2(N-M)(M-1)} \\
  0 & \sqrt{2(N-M)(M-1)} & 2(M-N) \\
\end{pmatrix} + M(N-M).
\ee

We modify the graph to depend upon the black box as follows.  For every subset in $A_2$, add an extra vertex connected by an edge to the original subset.  Denote the set of all such extra vertices by $B_2$.  Then the Hamiltonian for the algorithm is $H = H_U + H_C$, where, in the basis $\{|A_0\>,|A_1\>,|A_2\>,|B_2\>\}$,
\ba
  H_U &=
  \frac{1}{N^{2/3}}
  \begin{pmatrix}
    -2M & \sqrt{2M(N-M-1)} & 0 & 0 \\
    \sqrt{2M(N-M-1)} & 2-N & \sqrt{2(N-M)(M-1)} & 0 \\
    0 & \sqrt{2(N-M)(M-1)} & 2(M-N) & 0 \\
    0 & 0 & 0 & 0
  \end{pmatrix} \\
  H_C &=
  \begin{pmatrix}
    0 & 0 & 0 & 0 \\
    0 & 0 & 0 & 0 \\
    0 & 0 & 0 & 1 \\
    0 & 0 & 1 & 0
  \end{pmatrix}.
\ea

For the initial state, take a uniform superposition over the vertices of the original Johnson graph, namely
\ba
  |\psi(0)\> &:= |A_0 \cup A_1 \cup A_2\> \\
  &= \Big(\sqrt{\tbinom{N-2}{M}}|A_0\>+\sqrt{2\tbinom{N-2}{M-1}}|A_1\>+\sqrt{\tbinom{N-2}{M-2}}|A_2\>\Big)/\sqrt{\tbinom{N}{M}} \label{eq:edstartstate} \\
 &= \big(1 - O(N^{-1/3})\big) |A_0\> + O(N^{-1/6}) |A_1\> + O(N^{-1/3}) |A_2\>
\ea
where we have used the choice $M=\nint{N^{2/3}}$ in the last line.  Asymptotically, the starting state is essentially $|A_0\>$.

To analyze the algorithm, we compute the spectrum of $H$.  In principle, this could be done using techniques from \cite{CG03,CG04}.
However, since the evolution takes place within a four-dimensional subspace, we can compute the relevant eigenvalues and eigenvectors in closed form.  In particular, we find two eigenvalues $\lambda_\pm = \frac{-1\pm\sqrt{17}}{4} N^{1/3} + O(1)$ with eigenvectors
\ba
  |\lambda_+\> &= \left(\sqrt{\mu}+O(\tfrac{1}{N^{1/3}}),O(\tfrac{1}{N^{1/6}}),O(\tfrac{1}{N^{1/2}}),\sqrt{1-\mu}+O(\tfrac{1}{N^{1/6}})\right) \nonumber\\
  &\approx (0.6154,0,0,0.7882) \\
  |\lambda_-\> &= \left(\sqrt{1-\mu}+O(\tfrac{1}{N^{1/3}}),O(\tfrac{1}{N^{1/6}}),O(\tfrac{1}{N^{1/2}}),-\sqrt{\mu}+O(\tfrac{1}{N^{1/6}})\right) \nonumber\\
  &\approx (0.7882,0,0,-0.6154),
\ea
where
\be
  \mu := \frac{8}{17+\sqrt{17}} \approx 0.3787.
\ee
Asymptotically, the algorithm is effectively confined to the two-dimensional subspace spanned by $|A_0\>$ and $|B_2\>$.   The state rotates within this subspace at a rate determined by the inverse of the gap $\lambda_+-\lambda_- = O(N^{-1/3})$.  Therefore, the initial state \eq{edstartstate} reaches a state with overlap $O(1)$ on $|B_2\>$ in time $O(N^{1/3})$.

To quantify the query complexity of this approach to element distinctness, we invoke \thm{linearsim}.  As in \cite{Amb07}, suppose we store the $M$ function values along with the subset at each vertex.  Then preparing \eq{edstartstate} takes $M=\nint{N^{2/3}}$ queries.  Furthermore, a step of the corresponding discrete-time quantum walk can be simulated using two queries: the walk operator is local on the Johnson graph, so we simply uncompute one function value and compute another.  Given the function values for each subset, the extra edges for marked vertices can be included without any additional queries; indeed, by an observation in the proof of Theorem 8 of \cite{ACRSZ07}, they can be implemented by performing the walk on the graph with extra edges for \emph{every} vertex of the Johnson graph, together with a phase factor for the edges corresponding to marked vertices.  Thus, the total number of queries used by the algorithm is $O(N^{2/3} + \norm{\abs(H)} N^{1/3})$.  Since $\norm{\abs(H)} = 2N^{1/3}+O(N^{-1/3})$, we find an algorithm with running time (and, in particular, query complexity) $O(N^{2/3})$.

\section{The continuous-time query model}
\label{sec:hamquery}

Continuous-time quantum walk has proven useful in motivating new algorithmic ideas.  Ultimately, though, it is most straightforward to quantify the complexity of the resulting algorithms by the number of elementary gates needed to simulate them in the quantum circuit model, as in the example of \sec{ed}.

However, one can also formulate a notion of query complexity directly in a Hamiltonian-based model of computation.  Such a model was first introduced by Farhi and Gutmann to describe a continuous-time analog \cite{FG96} of Grover's search algorithm \cite{Gro97}; it was later studied in a broader context by Mochon \cite{Moc06} and applied to the quantum walk algorithm for evaluating balanced binary game trees \cite{FGG07}.

In a general formulation of the continuous-time query model, an algorithm is described by a Hamiltonian of the form $H_D(t)+H_Q$, where $H_Q$ is a fixed, time-independent Hamiltonian encoding a black-box input, and $H_D(t)$ is an arbitrary oracle-independent ``driving Hamiltonian'' (possibly time-dependent and with no a priori upper bound on its norm).  The complexity is quantified simply by the total evolution time required to produce the result with bounded error.  Equivalently, we can consider a model of fractional queries interspersed by non-query unitary operations, and take the limit in which the fractional queries can be arbitrarily close to the identity, charging only $1/k$ of a full query to perform the $k$th root of a query.

It is clear that the continuous-time query model is at least as powerful as the conventional query model.  Very recently, it was shown that the continuous-time query model is in fact not significantly more powerful.  In particular, any algorithm using continuous queries for time $t$ can be simulated with $O(t \log t/\log\log t)$ discrete queries \cite{CGMSY08}.

Here we consider the case where the driving Hamiltonian $H_D(t)$ is restricted to be time-independent.  We suppose that the query Hamiltonian $H_Q$ has the form used in \cite{FGG07}: namely, for a binary black-box input $x \in \{0,1\}^N$, $H_Q$ acts on $\C^N \otimes \C^2$ as
\ba
  H_Q |i,b\> = |i,b \oplus x_i\>
  \label{eq:queryham}
\ea
for any $i \in \{1,\ldots,N\}$ and $b \in \{0,1\}$.  In other words, the query Hamiltonian describes a graph on $2N$ vertices, with an edge between vertices $(i,0)$ and $(i,1)$ if $x_i=1$, and no such edge if $x_i=0$.

Consider the case where $H_D$ is time-independent.
By simulating $H_D+H_Q$ with high-order approximations of the Lie product formula \cite{BACS05,Chi04}, the evolution for time $t$ can be approximated using $(\norm{H_D}t)^{1+o(1)}$ queries to a unitary black box for the input $x$.  Thus, a continuous-time quantum walk algorithm with $\norm{H_D}=O(1)$ gives rise to a conventional quantum query algorithm using only $t^{1+o(1)}$ queries.  However, this simulation incurs more overhead than that of \cite{CGMSY08}.

Instead, applying \thm{linearsim}, and again using the observation from \cite[Theorem 8]{ACRSZ07} to implement the discrete-time quantum walk using discrete queries (as in the simulation of the element distinctness algorithm in \sec{ed}), we find

\begin{theorem}\label{thm:hamquery}
Consider a continuous-time query algorithm with the time-independent Hamiltonian $H_D + H_Q$ that runs in time $t$.  Then the query complexity in the conventional discrete-time model is $O(\norm{\abs(H_D)}t)$.
\end{theorem}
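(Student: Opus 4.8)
The plan is to invoke \thm{linearsim} for the combined Hamiltonian $H := H_D + H_Q$ and then account separately for (i) the number of walk steps it requires and (ii) the number of conventional queries needed per step. Since we only charge for calls to the oracle for $x$, every oracle-independent operation—the phase estimation isometry, the phase operation $F_t$, and the parts of the isometry $T$ coming from the fixed $H_D$—is free, so the query complexity is governed solely by how the input $x$ enters the walk.

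First I would bound the number of walk steps. By \thm{linearsim}, fixing the error to a constant $\delta$, the evolution $e^{-\ii H t}$ can be simulated with bounded error using $O(\norm{\abs(H)}\,t)$ steps of the discrete-time quantum walk \eq{discreteqwalk} corresponding to $H$. To replace $\norm{\abs(H)}$ by $\norm{\abs(H_D)}$, note that the triangle inequality gives $\abs(H_D+H_Q) \le \abs(H_D)+\abs(H_Q)$ entrywise; since both sides are entrywise nonnegative, monotonicity of the spectral norm under entrywise domination of nonnegative matrices followed by the triangle inequality for the norm yields $\norm{\abs(H)} \le \norm{\abs(H_D)} + \norm{\abs(H_Q)}$. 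From \eq{queryham}, $H_Q$ is Hermitian with $H_Q^2 = 1$ and entries in $\{0,1\}$, so $\abs(H_Q)=H_Q$ with $\norm{\abs(H_Q)}=1$; hence $\norm{\abs(H)} = O(\norm{\abs(H_D)})$ in the regime of interest (the additive $O(1)$ is dominated whenever $\norm{\abs(H_D)}t = \Omega(1)$, as it is for any nontrivial algorithm). This accounts for an $O(\norm{\abs(H_D)}t)$ bound on the number of walk steps.

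The main work—and the step I expect to be the \emph{main obstacle}—is to show that a single step of the walk for $H_D+H_Q$ can be performed with only $O(1)$ conventional queries, despite the fact that the walk operator \eq{discreteqwalk} depends on $x$ through the entries of $H_Q$ (and hence through the isometry $T$). Here I would follow the observation used in the proof of Theorem 8 of \cite{ACRSZ07}, exactly as in the element distinctness simulation of \sec{ed}: implement the reflection $2TT^\dag-1$ and the swap $S$ on the oracle-\emph{independent} graph in which every potential query edge of \eq{queryham} is present, and push the entire dependence on $x$ into query-controlled phase corrections applied between the oracle-independent pieces. Because $H_Q$ has the simple local structure of \eq{queryham}—acting within each pair $\{(i,0),(i,1)\}$ and determined by the single bit $x_i$—each correction is realized by a standard query to compute $x_i$, a phase, and an uncomputing query, costing $O(1)$ queries per walk step. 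The delicate point to verify is that interleaving these $O(1)$ phase queries with the oracle-independent walk reproduces the walk operator for the true input-dependent $H$; this is precisely the content of the \cite{ACRSZ07} observation specialized to the query Hamiltonian \eq{queryham}.

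Combining the two estimates, the total number of oracle calls is $O(1)$ per step times the $O(\norm{\abs(H_D)}t)$ walk steps, giving query complexity $O(\norm{\abs(H_D)}t)$ as claimed.
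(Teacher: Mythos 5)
Your proposal is correct and follows essentially the same route as the paper, which likewise obtains the theorem by applying \thm{linearsim} to $H_D+H_Q$ and invoking the observation from the proof of Theorem 8 of \cite{ACRSZ07} to implement each walk step of the input-independent graph, with the dependence on $x$ pushed into $O(1)$ query-controlled phases. The only slip is in your parenthetical: absorbing the $+\norm{\abs(H_Q)}=+1$ contribution requires $\norm{\abs(H_D)}=\Omega(1)$ rather than merely $\norm{\abs(H_D)}t=\Omega(1)$, since the resulting additive term in the step count is $O(t)$ and not $O(1)$ --- a regime restriction the paper also leaves implicit.
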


In the case where $H_D$ is time-independent and satisfies $\norm{\abs(H_D)}=O(1)$, this simulation outperforms \cite{CGMSY08}.  However, it may be less efficient if $\norm{\abs(H_D)} \gg 1$, and is not even applicable to a general time-dependent $H_D(t)$.  Note, however, that almost all known continuous-time quantum algorithms have $H_D$ time-independent (with the notable exception of adiabatic algorithms for search \cite{DMV01,RC02}, which can nevertheless be simulated efficiently in the quantum circuit model \cite{RC03}) and satisfy $\norm{\abs(H_D)}=O(1)$.  (It might be interesting to investigate the application of discrete-time quantum walk to simulating the dynamics of a time-dependent Hamiltonian.)

Notice that the element distinctness algorithm from \sec{ed} does not naturally fit into the continuous-time query model.  The simulation described in \sec{ed} uses $O(N^{2/3})$ initial discrete queries, followed by $O(N^{2/3})$ queries to simulate an evolution for time only $O(N^{1/3})$.  However, the techniques of \cite{CGMSY08} could not be used to simulate the latter evolution using only $O(N^{1/3} \log N)$ discrete queries: the initial queries and the queries used to simulate the evolution should balance to avoid violating the $\Omega(N^{2/3})$ lower bound for element distinctness \cite{AS04}.  Despite the superficial similarity between $H_C$ and $H_Q$, it is the term $H_U$, rather than $H_C$, that depends on the black-box input; overall, $H_U + H_C$ does not have the form of an oracle-independent Hamiltonian plus a query Hamiltonian of the form \eq{queryham}.

\section{A sign problem for Hamiltonian simulation}
\label{sec:signproblem}

Although the Hamiltonian simulations described in Sections~\ref{sec:limit}--\ref{sec:linear} go considerably beyond previous techniques, they stop short of what might be possible.  We conclude by considering two problems for which improved simulation methods would be valuable: approximating exponential sums and implementing quantum transforms over association schemes.  Hopefully, these potential applications will motivate further work on the simulation of Hamiltonian dynamics.

The essential problem with simulations based on the correspondence to discrete-time quantum walk has to do with the appearance of $\norm{\abs(H)}$, rather than $\norm{H}$, in \thm{linearsim}.  The natural scaling parameter for Hamiltonian simulation would seem to be the basis-independent quantity $\norm{H}t$ rather than the basis-dependent quantity $\norm{\abs(H)}t$; it seems reasonable to attempt a simulation in time $\poly(\norm{H}t)$.  However, we do not currently know how to do this except in special cases.

There are at least two potential approaches to circumventing this limitation.  It might be possible to use decomposition techniques, such as the decomposition of trees into stars described in \sec{nonsparse}, in a more general context.  We also might try to perform simulations in alternative bases that can be reached by efficient unitary transformations.

\subsection{Approximating Kloosterman sums}

An \emph{exponential sum} over $\F_q$, the finite field with $q$ elements, is an expression of the form
\be
  \sum_{x \in \F_q} \chi(f(x)) \, \psi(g(x)),
\label{eq:expsum}
\ee
where $\chi$ and $\psi$ are multiplicative and additive characters of $\F_q$, respectively, and $f,g \in \F_q[x]$ are polynomials.  It is well known that, under fairly mild conditions,
\be
  \bigg|\sum_{x \in \F_q} \chi(f(x)) \, \psi(g(x))\bigg| \le \alpha \sqrt{q}
\ee
where the coefficient $\alpha$ depends on the degrees of $f$ and $g$ (see for example \cite[Theorem 2.6]{Sch04}).  However, computing the value of an exponential sum---or even approximating its magnitude as a fraction of $\sqrt{q}$ with, say, constant precision---appears to be a difficult problem in general.

When $f$ and $g$ are both linear in $x$, the sum \eq{expsum} is known as a \emph{Gauss sum}.  Gauss sums have magnitude precisely $\sqrt{q}$ provided $\chi,\psi$ are nontrivial.  No efficient classical algorithm for computing the phase of a general Gauss sum is known, but this phase can be efficiently approximated using a quantum computer \cite{DS03}.  It is an open question whether quantum computers can efficiently approximate other exponential sums.  (Note that for the special case of small characteristic, it is possible to calculate more general exponential sums using the results of \cite{Ked06}, as observed by Shparlinski; see \cite{CSV07}.)

Another type of exponential sum of particular interest is the \emph{Kloosterman sum}, which is obtained by letting $\chi$ be the quadratic character, $f(x)=x^2-c$ for some fixed $c \in \F_q$, and $g(x)=x$.  It is easy to see that such a sum is real-valued.  A classic result of Weil says that, provided $\psi$ is nontrivial and $c \ne 0$, the absolute value of the Kloosterman sum is at most $2\sqrt{q}$ \cite{Wei48}.  However, I am not aware of an efficient algorithm to compute even a single nontrivial bit of information about the Kloosterman sum, such as its sign or whether its magnitude is larger than $\sqrt{q}$.

Kloosterman sums arise naturally in problems involving hidden nonlinear structures over finite fields \cite{CSV07}.  In particular, the eigenvalues of a graph known as the \emph{Winnie Li graph} (in even dimensions) are proportional to Kloosterman sums.  An efficient quantum algorithm for approximating Kloosterman sums would provide new efficient quantum algorithms for certain hidden nonlinear structure problems.  For example, this would give a way to implement a quantum walk that could be used to solve the so-called hidden flat of centers problem.  Conversely, an implementation of that quantum walk could be used to estimate the sums, using phase estimation.

Let us consider a simple variant of the Winnie Li graph that exemplifies the relevant problem.  Let $G$ be the Cayley graph of the additive group of $\F_q$ with the generating set $X:=\{x \in \F_q: \chi(x^2 - c)=+1 \}$ for some $c \in \F_q^\times$, where $\chi$ denotes the quadratic character of $\F_q$.  Let $p$ be the characteristic of $\F_q$, and let $\tr:\F_q \to \F_p$ be the trace map, defined by $\tr(x)=x+x^p+x^{p^2}+\cdots+x^{q/p}$.  Observe that
\be
  \delta[x \in X] = \frac{1}{2}\big(1+\chi(x^2-c)-\delta[x^2=c]\big),
\ee
where $\delta[P]$ is $1$ if $P$ is true and $0$ if $P$ is false.
Then for each $k \in \F_q$, the Fourier vector
\ba
  |\tilde k\> := \frac{1}{\sqrt q}\sum_{x \in \F_q} \omega_p^{\tr(kx)}|x\>
  \label{eq:fourierstates}
\ea
(where $\omega_p := e^{2\pi\ii/p}$) is an eigenvector of $G$ with eigenvalue
\ba
  \sum_{x \in X} \omega_p^{\tr(kx)}
  &= \sum_{x \in \F_q} \frac{1}{2}\big(1+\chi(x^2-c)-\delta[x^2=c]\big) \omega_p^{\tr(kx)} \\
  &= \frac{1}{2}\bigg(q \, \delta[k=0]+\sum_{x \in \F_q} \chi(x^2-c) \, \omega_p^{\tr(kx)}\bigg) - \cos\frac{2\pi k \sqrt{c}}{p}
\ea
(where the term involving $\sqrt{c}$ does not appear if $c$ is not a square in $\F_q$).  In particular, the eigenvalues of $G$ are simply related to Kloosterman sums.

Notice that $k=0$ gives an eigenvalue $|X| = (q + 1)/2 - \delta[\chi(c)=+1] = (q \pm 1)/2$ (the degree of $G$) corresponding to the uniform vector.  Since we are only interested in the nontrivial Kloosterman sums, we can subtract off the projection of the adjacency matrix onto the uniform vector, defining a symmetric matrix $H$ with
$
  \<x|H|x'\> = \delta[x-x' \in X] - {|X|}/{q}
$
for
$
  x,x' \in \F_q
$.
This Hamiltonian satisfies $\norm{H} \le 2\sqrt{q}$ (by Weil's Theorem), and a simulation of its dynamics for time $t$ in $\poly(\norm{H}t,\log q)$ steps would give an efficient quantum algorithm for approximating Kloosterman sums with constant precision.

Unfortunately, \thm{linearsim} is insufficient for this task.  Whereas $\norm{H} \sim 2\sqrt{q}$, a simple calculation shows that $\norm{\abs(H)} = (q^2-1)/2q = \Theta(q)$, so $\norm{\abs(H)}/\norm{H}$ is exponentially large (in $\log q$).  Similar considerations hold for the actual Winnie Li graph.

\subsection{Quantum transforms for association schemes}

An \emph{association scheme} is a combinatorial object with useful algebraic properties (see \cite{God05} for an accessible introduction).  We say that a set of matrices $A_0,A_1,\ldots,A_d \in \{0,1\}^{N \times N}$ is a $d$-class association scheme on $N$ vertices provided (i) $A_0 = I$, the $N \times N$ identity matrix; (ii) $\sum_{i=0}^d A_i = J$, the $N \times N$ matrix with every entry equal to $1$; (iii) $A_i^T \in \{A_0,\ldots,A_d\}$ for each $i \in \{0,\ldots,d\}$; and (iv) $A_i A_j = A_j A_i \in \spn\{A_0,\ldots,A_d\}$ for each $i,j \in \{0,\ldots,d\}$.  An association scheme partitions the relationships between pairs of vertices into classes: we say that the relationship between vertex $x$ and vertex $y$ is of type $i$ if $(A_i)_{x,y}=1$ (which implies that $(A_j)_{x,y}=0$ for all $j \ne i$).  Let $N_i$ denote the number of vertices of type $i$ relative to any given vertex (or equivalently, the number of $1$s in any given row of $A_i$).

The matrices $A_0,A_1,\ldots,A_d$ generate a $(d+1)$-dimensional algebra over $\C$ called the \emph{Bose-Mesner algebra} of the scheme.  This algebra is also generated by a set of $d+1$ projection matrices $E_0,E_1,\ldots,E_d$ called its \emph{idempotents}.  These projections are orthonormal ($E_i E_j = \delta_{i,j} E_i$ for all $i,j \in \{0,\ldots,d\}$) and complete ($\sum_{i=0}^d E_i = I$); furthermore, the range of $E_i$ is an eigenspace of $A_j$ for every $i,j \in \{0,\ldots,d\}$.  Since they lie in the Bose-Mesner algebra, the idempotents can be expanded as $E_i = \frac{1}{N} \sum_{j=0}^d q_{ij} A_j$, where the $q_{ij}$ are referred to as the \emph{dual eigenvalues} of the scheme.

The unitary matrices in the Bose-Mesner algebra are precisely those matrices of the form
\be
  U = \sum_{i=0}^d e^{\ii\phi_i} E_i
  \label{eq:assocunitary}
\ee
for some phases $\phi_i \in \R$.  Since they are highly structured, these unitary operators can be concisely specified even when we think of $N$ as exponentially large, and thus they represent a natural class of quantum transforms for possible use in quantum algorithms.  Closely related to such a transform is the POVM with elements $\{E_0,E_1,\ldots,E_d\}$: the ability to perform this measurement (coherently) can be used to implement any desired unitary transform in the Bose-Mesner algebra, and conversely, implementation of any unitary transform in the Bose-Mesner algebra with well-separated eigenvalues can be used to measure $\{E_0,E_1,\ldots,E_d\}$, by phase estimation.

For a concrete application of association scheme transforms, consider the following problem.  Fix a symmetric $N$-vertex association scheme and an (unknown) vertex $t$ of that scheme.  Suppose we are given a black box function $f:\{1,\ldots,N\} \to S$ with the promise that for all $x,y \in \{1,\ldots,N\}$, $f(x)=f(y)$ if and only if $x$ and $y$ are of the same class with respect to $t$ (or in other words, if and only if $(A_i)_{t,x} = (A_j)_{t,y} = 1 \implies i=j$).  The task is to learn $t$ using as few queries to $f$ as possible, ideally only $\poly(\log N)$.

The \emph{shifted quadratic character problem} is a particular instance of this problem in which the association scheme is the \emph{Paley scheme}.  It has $d=2$ classes, meaning that it corresponds to a \emph{strongly regular graph}, the Paley graph.  In the Paley scheme, the vertices are elements of $\F_q$ (where $q=1 \bmod 4$), and $(A_1)_{x,y}=1$ if and only if $x-y$ is a square in $\F_q$.  The sequence of values of the quadratic character $\chi(t), \chi(t+1), \chi(t+2), \ldots$ has been proposed as a pseudorandom generator \cite{Dam88}, and indeed no efficient classical algorithm is known that will learn $t$ from the function $f(x)=\chi(t+x)$ that hides $t$ in the Paley scheme.  On the other hand, van Dam, Hallgren, and Ip discovered a quantum algorithm that uses the quantum Fourier transform to solve this problem in time $\poly(\log q)$ \cite{DHI03}.

Association scheme transforms provide a general approach to this problem.  Suppose we prepare a uniform superposition over all vertices of the scheme, compute the hiding function $f$, and then discard its value.  The result is a uniform superposition over points that are of type $i$ relative to $t$,
\be
  |\psi_i\> := \frac{1}{\sqrt{N_i}} \sum_{x:(A_i)_{t,x}=1} |x\>,
\ee
where type $i$ occurs with probability $N_i/N$.  Now suppose we apply the unitary operator \eq{assocunitary}, giving the state $U|\psi_i\>$, and measure the vertex.  The probability that we obtain the vertex $t$ is
\ba
  |\<\psi_0|U|\psi_i\>|^2
  = \frac{1}{N^2} \bigg| \sum_{j,k=0}^d e^{\ii \phi_j}
     q_{jk} \<\psi_0| A_k |\psi_i\> \bigg|^2
  = \frac{N_i}{N^2} \bigg| \sum_{j=0}^d e^{\ii \phi_j} q_{ji} \bigg|^2.
\ea
Supposing that we know the value of $i$, we can choose the phases $\phi_j$ to obtain a success probability $N_i(\sum_j |q_{ji}|)^2/N^2$.  Since type $i$ occurs with probability $N_i/N$, by always choosing the phases to optimize the probability of obtaining vertex $t$ given that type $i$ occurred (and pessimistically assuming that we never obtain vertex $t$ when another type occurs), we can succeed with probability at least $N_i^2(\sum_j |q_{ji}|)^2/N^3$.  Thus, by always choosing the phases according to the value of $i$ that maximizes this expression, we can achieve an overall success probability of at least
\be
  \max_i \frac{N_i^2}{N^3} \bigg( \sum_{j=0}^d |q_{ji}| \bigg)^2.
\ee
For example, in the Paley scheme, a straightforward calculation shows that this approach succeeds with probability at least $(\sqrt{q}+1)^2(q-1)^2/4q^3 = 1/4 + O(1/\sqrt{q})$.

The unitary operators \eq{assocunitary} for the Paley scheme can be implemented efficiently because the corresponding idempotents are projections onto Fourier basis vectors.  In particular, $E_0 = |\tilde 0\>\<\tilde 0|$, $E_1$ projects onto $\spn\{|\tilde k\>: \chi(k)=+1\}$, and $E_2$ projects onto $\spn\{|\tilde k\>: \chi(k)=-1\}$ (recall \eq{fourierstates}).  Thus, using the Fourier transform and the ability to compute quadratic characters (which can be done efficiently using Shor's algorithm for discrete logarithms \cite{Sho97}), we can efficiently carry out the above strategy to solve the shifted quadratic character problem.

However, this approach is not very different from the one in \cite{DHI03}, which employs similar tools.  Instead, it would be appealing to have a purely combinatorial means of implementing \eq{assocunitary}, which could be applied even for association schemes without an underlying group structure.

One way to implement \eq{assocunitary} would be to coherently measure each of the idempotents in turn, performing a phase shift conditional on the measurement outcome and then undoing the measurement.  In principle, these measurements could be performed by applying phase estimation to a simulation of Hamiltonian dynamics with the Hamiltonian given by the idempotent.  Unfortunately, in many cases of interest, this approach encounters the same sign problem seen in the previous section.  For example, whereas $\norm{E_1}=1$, one can show that $\norm{\abs(E_1)}=(\sqrt{q}+1)(q-1)/2q=\Theta(\sqrt{q})$.  Since $\norm{\abs(E_1)}/\norm{E_1}$ is exponentially large in $\log q$, we again find that \thm{linearsim} does not give an efficient implementation.

\section*{Acknowledgements}

I would like to thank Ben Reichardt and Rolando Somma for pointing out the choice \eq{jstates_abs}; Ben Reichardt, Robert \v{S}palek, and Shengyu Zhang for discussions, in the course of writing \cite{ACRSZ07}, that led to the simulation described in \thm{trees}; Chris Godsil and Simone Severini for discussions of association schemes; and Dominic Berry and Aram Harrow for discussions of phase estimation.
This work was supported by MITACS, NSERC, QuantumWorks, and the US ARO/DTO.


\newcommand{\noopsort}[1]{}
\providecommand{\bysame}{\leavevmode\hbox to3em{\hrulefill}\thinspace}

\end{document}